\newtheorem{lemma}{Lemma}
\newtheorem{corollary}{Corollary}
\newtheorem{definition}{Definition}
\title{Variations of the cop and robber game on graphs}
\author{
	\begin{tabular}{rcl}
	Carl Joshua Quines\thanks{equal first author contribution} & \qquad & Espen Slettnes\footnotemark[1] \\
	Shen-Fu Tsai & \qquad & Jesse Geneson\thanks{PSU math department}
    \end{tabular}
    \small}
\begin{document}
\maketitle
\begin{abstract}
  We prove new theoretical results about several variations of the cop and robber game on graphs. First, we consider a variation of the cop and robber game which is more symmetric called the cop and killer game. We prove for all $c < 1$ that almost all random graphs are stalemate for the cop and killer game, where each edge occurs with probability $p$ such that $\frac{1}{n^{c}} \le p \le 1-\frac{1}{n^{c}}$. We prove that a graph can be killer-win if and only if it has exactly $k\ge 3$ triangles or none at all. We prove that graphs with multiple cycles longer than triangles permit cop-win and killer-win graphs. For $\left(m,n\right)\neq\left(1,5\right)$ and $n\geq4$, we show that there are cop-win and killer-win graphs with $m$ $C_n$s. In addition, we identify game outcomes on specific graph products.

  Next, we find a generalized version of Dijkstra's algorithm that can be applied to find the minimal expected capture time and the minimal evasion probability for the cop and gambler game and other variations of graph pursuit.

  Finally, we consider a randomized version of the killer that is similar to the gambler. We use the generalization of Dijkstra's algorithm to find optimal strategies for pursuing the random killer. We prove that if $G$ is a connected graph with maximum degree $d$, then the cop can win with probability at least $\frac{\sqrt d}{1+\sqrt d}$ after learning the killer's distribution. In addition, we prove that this bound is tight only on the $\left(d+1\right)$-vertex star, where the killer takes the center with probability $\frac1{1+\sqrt d}$ and each of the other vertices with equal probabilities.
\end{abstract}

\section{Introduction}
The game of cop and robber on a graph is a simple model of the process of pursuing an adversary. Nowakowski and Winkler \cite{NW} and Quilliot \cite{Q} independently defined the game on a given graph $G$ and identified the graphs on which the cop has a winning strategy, assuming that both players use optimal strategies. Several other papers have studied different aspects of this game \cite{3,4,8,9,10}, such as the capture time.

In the original version of the game, the cop and robber play the following game on a graph $G$: the cop chooses a vertex, then the robber chooses a vertex, then the players take turns moving beginning with the cop. A move is either staying at one's present vertex or moving to an adjacent vertex, and both players see each move. The cop wins if they occupy the same vertex at some point, otherwise the robber wins.

Other versions of graph pursuit have also been studied, such as a variation where the robber becomes a rabbit and is able to hop between vertices \cite{1,2}. In another version of the game, the robber becomes a gambler and uses a fixed probability distribution on the vertices of the graph to determine its next vertex, moving simultaneously with the cop or cops \cite{geneson_arxiv_2016,geneson_arxiv_2017,komarov_arxiv_2013,tsai_arxiv_2017}.

In this report we prove new theoretical results about several variations of the cop and robber game. We consider a variation of the cop and robber game which is more symmetric. The two players are now a cop and a killer, also a perfect information game. The cop and killer each pick a vertex. Then the game begins, with the cop going first. She moves to an adjacent vertex, followed by the killer moving to an adjacent vertex. If after the cop moves she occupies the vertex where the killer is located, she wins. Similarly, if after the killer moves, he occupies the vertex where the cop is located, he wins. If neither happens after a infinite amount of turns, it is a stalemate.

We prove for all $c < 1$ that almost every random graph is stalemate for the cop and killer game, where each edge occurs with probability $\frac{1}{n^{c}} \le p \le 1-\frac{1}{n^{c}}$. Next, we prove that a graph can be killer-win if and only if it has either exactly $k\ge 3$ triangles  or none at all. We give examples of infinite graphs that are cop-win, killer-win, and stalemate. In addition, we prove that graphs with multiple cycles longer than triangles permit cop-win and killer-win graphs. For $\left(m,n\right)\neq\left(1,5\right)$ and $n\geq4$, there are cop-win and killer-win graphs with $m$ $C_n$s.

For connected graphs $H$ and $G$ with $|G|>1$ and $|H|>1,$ we find the result of the cop and killer game on several products of $G$ and $H$. We prove that $G~\square~ H$ is a stalemate if at least one of $G$ and $H$ is not a tree. We show that $G\times H$ is a stalemate if both $G$ and $H$ are not killer-win, and killer-win otherwise. Finally, we prove that $G\boxtimes H$ is cop-win if both $G$ and $H$ have a universal vertex, and is stalemate otherwise.

We find a generalized version of Dijkstra's algorithm that can be applied to find the minimal expected capture time and the minimal evasion probability for the cop and gambler game and other variations of pursuit on graphs. We further consider a randomized version of the killer that is similar to the gambler. We use this algorithmic method to find optimal strategies for pursuing the random killer. We prove that if $G$ is a connected graph with maximum degree $d$, then the cop can win with probability at least $\frac{\sqrt d}{1+\sqrt d}$ after learning the killer's distribution. In addition, we prove that this bound is tight only on the $\left(d+1\right)$-vertex star, where the killer takes the center with probability $\frac1{1+\sqrt d}$ and each of the other vertices with equal probabilities.

\section{Cop and killer}
As mentioned previously, this is a more symmetric variation of cop and robber. We found some basic properties of cop-win, killer-win, and stalemate graphs.
\begin{lemma}
  (1) All graphs with a universal vertex are cop-win.\\
  (2) All trees which are not stars are killer-win.\\
  (3) Triangle is cop-win, four-cycle is killer-win, and cycles with more than four vertices are stalemate graphs.\\
  (4) Grid graphs larger than $1\times3$ are killer-win. King's graphs larger than $3\times3$ are stalemate graphs.\\
  (5) A bipartite graph is cop-win if and only if it is a star.
\end{lemma}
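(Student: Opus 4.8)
The plan is to handle the two directions separately, with essentially all of the work in the ``only if'' direction, which I would attack with a bipartition-coloring parity argument that relies on the rule that a player \emph{must} move to an adjacent vertex on their turn.

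The ``if'' direction is immediate: a star $K_{1,n}$ with $n\ge1$ has a universal vertex (its center), so part (1) of this lemma already says it is cop-win; the one-vertex graph is a vacuous case.

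For the ``only if'' direction I would prove the contrapositive, that a connected bipartite graph $G$ which is not a star is not cop-win (the disconnected case being trivial: the killer just picks a vertex in a component avoiding the cop, and capture is impossible). First I would record a structural fact: in a connected bipartite graph in which one side of the bipartition is a single vertex $v$, connectivity forces $v$ adjacent to every other vertex (all edges cross the bipartition), so $G$ is a star. Hence for our $G$ both color classes have at least two vertices. Now fix the $2$-coloring. When the cop opens at a vertex $v$ of color $c$, the killer replies at another vertex $u\ne v$ of the same color $c$ — a legal choice because same color forces $u\not\sim v$, so the cop cannot win on move one. The key claim is the invariant: \emph{whenever it is the cop's turn to move, the cop and killer occupy vertices of the same color}. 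It holds at the outset, and it is preserved because in a full round the cop's (forced) move flips its color and the killer's (forced) move flips back. Therefore, immediately after any cop move the two players are on opposite colors, hence on distinct vertices, so the cop never captures the killer, whatever the subsequent play. Thus $G$ is not cop-win.

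The step I expect to need the most care is making the ``forced move'' rule explicit and using it correctly: the whole parity invariant collapses if a player may stay put, and indeed it is exactly this rule that makes $C_4$ killer-win in part (3) rather than a stalemate. A secondary point is verifying that the killer's opening reply does not backfire, but this is handled by the same color $\Rightarrow$ non-adjacency observation together with the invariant, which shows the cop cannot capture the killer at any stage.
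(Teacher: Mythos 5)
There is a genuine gap: the statement is a five-part lemma, and your proposal proves only part (5). You invoke part (1) for the ``if'' direction (legitimately, since a star has a universal vertex and (1) is immediate) and you mention part (3) in passing, but you give no argument at all for parts (2), (3), and (4) -- that non-star trees are killer-win, that $C_3$ is cop-win, $C_4$ killer-win and longer cycles stalemate, and the grid/king's-graph claims. These are not throwaway items: the paper proves each by exhibiting an explicit strategy (e.g.\ the killer starting two steps from the cop on a non-star tree or grid and maintaining the relative position; the distance argument showing players on a long cycle can keep distance at least two; the king's-graph alignment argument). As written, your submission is a proof of one clause of the lemma, not of the lemma.

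On the part you do prove, your argument is correct and in fact tighter than the paper's. The paper splits on whether the cop's side $A$ of the bipartition is a singleton and, in the case $|A|>1$, only sketches that the killer ``switches between $A$ and $B$ following the cop, or wins''; your parity invariant (after every cop move the two players sit in opposite color classes, hence on distinct vertices, so the cop can never capture) makes the same idea precise in one stroke and cleanly covers the disconnected case. Your reliance on the forced-move rule is also consistent with the paper's conventions, exactly as you note via the $C_4$ example. So the fix is simply to supply proofs of (1)--(4) alongside your part (5) argument.
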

\begin{proof}
  (1) is trivial.\\
  (2) If a tree has no universal vertex, then it has at least $3$ vertices. The killer could always pick a vertex that is $2$ steps away from the cop and then chase down the cop.\\
  (3) In a cycle with more than four vertices, whenever two players are two steps apart, the player who moves next moves away from the other and increases their shortest distance by one. Therefore they will never be one step apart, which is a stalemate.\\
  (4) In a $1\times k$ grid graph with $k>3$, the killer could pick a vertex two steps away from the cop. In a $k\times l$ grid graph with $k,l>1$, the killer could start at a vertex two steps away from but not horizontally or vertically aligned with the cop. In both cases the killer wins by keeping the same relative position to the cop. In a king's graph larger than $3\times3$, the killer can always choose an initial position at least two steps from the cop. Then, whenever they are exactly two steps apart, they must be vertically or horizontally aligned. Hence the next moving player can always move to a vertex further away from the other, therefore it is a stalemate.\\
  (5) A star is clearly cop-win. For a non-star bipartite graph $G$ with $V\left(G\right)=A\cup B$ and $A\cap B=\emptyset$, suppose cop starts at $v\in A$. If $|A|=1$ since $G$ is not a star, the killer can remain free forever by selecting a vertex in $B$ not adjacent to $v$; if $|A|>1$, then the killer starts at another vertex $u$ in $A$. He later either switches between $A$ and $B$ following the cop, or wins by capturing the cop.
\end{proof}

There are also interesting properties regarding retract and bipartiteness.
\begin{lemma} 
  \begin{enumerate}
  \item [(1)] Killer-win and stalemate graphs are all non-retract-closed.
  \item [(2)] If a graph is not stalemate, then it either has an universal vertex or there exist vertices $u, v\in V\left(G\right)$ such that $N\left(v\right)\subseteq N\left(u\right)$ and there is no edge between $u$ and $v$.
  \item [(3)] For each $m \geq 1$, there are non-bipartite killer-win graphs that are also $C_{2k+1}$-free for each $1 \leq k \leq m$.
  \item [(4)] The maximum number of edges is $\Theta\left(n^2\right)$ for a non-bipartite killer-win graph that is also $C_{2k+1}$-free for each $k \leq m$ with $m$ fixed.
  \end{enumerate}
\end{lemma}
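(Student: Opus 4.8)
The four parts are essentially independent, so I would treat them one at a time, using Lemma~1 freely.

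For (1) I would contrast with the classical fact that the cop-win graphs are exactly the dismantlable graphs and hence closed under retracts, and then exhibit small counterexamples for the other two classes. Take $P_4=v_1v_2v_3v_4$: it is a tree that is not a star, so it is killer-win by Lemma~1(2); the map fixing $v_1,v_2,v_3$ and sending $v_4\mapsto v_2$ is a retraction of $P_4$ onto the induced $K_{1,2}$, which is cop-win by Lemma~1(1). Hence the class of killer-win graphs is not closed under taking retracts. Similarly, $C_6=v_1v_2\cdots v_6$ is a stalemate by Lemma~1(3), and folding it by $v_5\mapsto v_3$, $v_6\mapsto v_2$ retracts it onto an induced $P_4$, which is killer-win; so the class of stalemate graphs is not retract-closed either. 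The only thing to verify is that these two vertex maps really are graph homomorphisms fixing the target, which is immediate from inspection.

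For (2) I would prove the contrapositive: assume $G$ has no universal vertex and no non-adjacent pair $u,v$ with $N(v)\subseteq N(u)$, and show $G$ is a stalemate by handing both players the same rule — on your move, go to a vertex at distance at least $2$ from your opponent. Because $G$ has no universal vertex, for any starting vertex of the cop the killer can pick a vertex at distance $\ge 2$, so when the killer uses this rule play begins in such a configuration. The key point is that the rule can always be followed: if it is some player's move, they sit at $p$, the opponent at $q$, and $\mathrm{dist}(p,q)\ge 2$, then there is no legal move to distance $\ge 2$ precisely when $N(p)\subseteq N[q]$; but $p\notin N[q]$, so this is the same as $N(p)\subseteq N(q)$ with $p\not\sim q$, which the hypothesis rules out. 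Therefore the two players stay at distance $\ge 2$ at the start of every move, and neither is ever captured, so $G$ is a stalemate. (And if an opponent ever blunders to distance $1$, the mover just captures and wins, so following the rule also never gets its user captured — this is what makes each player's strategy valid against arbitrary play.)

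For (3) and (4) I would build on the fact, quoted in the introduction, that for $(m',n)\ne(1,5)$ and $n\ge4$ there exist killer-win graphs whose cycles are exactly $m'$ copies of $C_n$. For (3), taking $n=2m+3$ (and, say, $m'=2$ when $m=1$) gives a killer-win graph containing the odd cycle $C_{2m+3}$, hence non-bipartite, in which every cycle has length $2m+3>2k+1$ for $1\le k\le m$, hence $C_{2k+1}$-free for all such $k$. For (4), the upper bound is just $\binom n2=O(n^2)$, so the work is a dense construction. I would use the blow-up $H=C_{2m+3}[\overline{K_s}]$ with $s=\lfloor n/(2m+3)\rfloor$: replace each vertex of the odd cycle $C_{2m+3}$ by an independent set (a ``blob'') of size $s$, join every pair of vertices lying in consecutive blobs, and dump any leftover vertices into one blob. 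Then $H$ has $(2m+3)s^2=\Theta(n^2)$ edges; it is non-bipartite since picking one vertex per blob in cyclic order yields a $C_{2m+3}$; and it is $C_{2k+1}$-free for $k\le m$ because any cycle of $H$ of length $\ell$ projects to a closed walk of length $\ell$ in $C_{2m+3}$ whose signed steps sum to $0\bmod(2m+3)$, and such a walk has odd length only if that sum is a nonzero multiple of $2m+3$, whence $\ell\ge2m+3$. Finally $H$ is killer-win: since $s\ge2$ the killer can begin in the cop's blob at a different (hence non-adjacent) vertex, the cop must then move into an adjacent blob, and the killer — adjacent to every vertex of that blob — moves onto the cop and wins. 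The part I expect to need the most care is this odd-girth verification for $H$ (and the analogous ``no short odd cycle'' check for the construction invoked in (3)): one must argue cleanly that a short odd cycle would force a short odd closed walk in $C_{2m+3}$, which cannot exist.
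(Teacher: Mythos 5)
Parts (1), (2) and (4) of your proposal are correct and essentially match the paper. For (1) the paper uses $C_4\to P_3$ and $C_6\to P_4$; your $P_4\to P_3$ and $C_6\to P_4$ retractions work just as well. For (2) the paper argues about the last two moves of a winning line, while you give both players an explicit ``stay at distance $\ge 2$, capture if the opponent blunders to distance $1$'' strategy; your version is if anything more rigorous, and the key equivalence is right: since players must move, the mover at $p$ has no move to distance $\ge 2$ from $q$ exactly when $N(p)\subseteq N[q]$, which together with $p\notin N[q]$ yields a non-adjacent pair with $N(p)\subseteq N(q)$, contradicting the hypothesis. Your (4) is exactly the paper's construction (the blow-up of $C_{2m+3}$ into independent clusters with consecutive clusters completely joined); your projection/parity argument for $C_{2k+1}$-freeness and your explicit killer strategy (start in the cop's cluster, capture after the cop's forced move into a neighboring cluster) carefully justify what the paper only asserts.

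The genuine gap is in (3). You rely on ``killer-win graphs whose cycles are exactly $m'$ copies of $C_n$,'' but the result you are quoting only promises killer-win graphs \emph{containing} $m'$ cycles of length $n$; it does not exclude other cycles, and in fact the paper's killer-win construction for $n\ge 5$ is a strip of triangles glued along sides, so it contains $C_3$ and is not even $C_3$-free. Hence the step ``every cycle has length $2m+3>2k+1$'' fails for the graphs that lemma actually provides, and nothing available supplies a killer-win graph whose only cycles are long odd cycles. The repair is already contained in your own solution to (4): take $s=2$, i.e.\ the blow-up $C_{2m+3}[\overline{K_2}]$ on $4m+6$ vertices (two vertices per cluster). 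It is killer-win by your cluster argument, non-bipartite because it contains $C_{2m+3}$, and $C_{2k+1}$-free for all $k\le m$ by your projection argument; this is precisely the construction the paper uses for part (3).
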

\begin{proof}
  \begin{enumerate}
  \item [(1)] $C_4$ retracts to $P_3$, the former is killer-win and the latter is cop-win. Also $C_6$, stalemate, retracts to $P_4$, killer-win.
    
  \item [(2)] It suffices to show that a non-stalemate graph without universal vertex has two vertices $u$ and $v$ with $N\left(v\right)\subseteq N\left(u\right)$ and no edge between $u$ and $v$. Suppose in the last step a player moves from $u$ to $w$ and capture the other, and in the previous step the losing player moves from $v$ to $w$. If $N\left(v\right)$ is not a subset of $N\left(u\right)$, then the losing player could have moved to a vertex in $N\left(v\right)-N\left(u\right)$ without losing. If there is an edge between $u$ and $v$, then the losing player could have killed the winning player.
    
  \item [(3)] We construct a graph $G$ with vertices $A_1,A_2,\ldots,A_{2m+3},B_1,B_2,\ldots,B_{2m+3}$. $\left(A_i,B_{i+1}\right), \left(A_i,A_{i+1}\right), \left(B_i,B_{i+1}\right), \left(B_i,A_{i+1}\right)\in E\left(G\right)$ for each $i\in[2m+2]$. Also $\left(A_{2m+3},A_{1}\right),\left(B_{2m+3},A_{1}\right),\left(A_{2m+3},B_{1}\right),\left(B_{2m+3},B_{1}\right)\in E\left(G\right)$. Because $G$ has odd cycles it is not bipartite. Define clusters $K_i=\{A_i,B_i\}$ for $i\in[2m+3]$. The killer wins by starting at the vertex belonging to the same cluster $K_i$ as the cop. Moreover, clearly the smallest odd cycle has length $2m+3$. Figure \ref{fig:cyclefree} shows this for $m=5$.
    
    \begin{figure}[H]
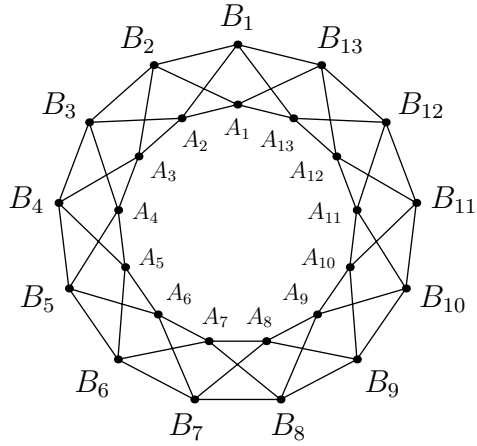

      \begin{asy}
        size(2.5inch);
        int rotations = 13;
        real orientation=90;
        for (int i=0; i<rotations; ++i)
        {
          dot(2*dir(orientation+i*360/rotations));
          label(scale(0.75)*Label("$A_{"+(string)(i+1)+"}$"), 2*dir(orientation+i*360/rotations), -1.5*dir(orientation+i*360/rotations));
          
          dot(3*dir(orientation+i*360/rotations));
          label("$B_{"+(string)(i+1)+"}$", 3*dir(orientation+i*360/rotations), 1.5*dir(orientation+i*360/rotations));
          
          draw(2*dir(orientation+i*360/rotations) -- 2*dir(orientation+(i+1)*360/rotations) -- 3*dir(orientation+i*360/rotations) -- 3*dir(orientation+(i+1)*360/rotations) -- cycle);
        }
        
      \end{asy}
      \centering
      \caption{\footnotesize A $C_{2k+1}$-free non-bipartite killer-win graph for each $1 \leq k \leq 5$.}
      \label{fig:cyclefree}
    \end{figure}
    
  \item [(4)] Similar to the construction in (3), assume $n\ge4m+6$ and build a graph with vertices labeled $1$ through $n,$ where vertices $u$ and $v$ are connected if and only if $u-v\equiv\pm1\pmod{2m+3}$. Define clusters $K_i=\{v \mid v \equiv i \pmod {2n+3} \}$ for $i \in [2m+3].$ Killer wins if he picks the same cluster as the cop. The graph has $\Theta\left(\left(\frac{n}{2m+3}\right)^2\right)$ edges and no $C_{2k+1}$ for each $k\in[m]$.
    
    \begin{figure}[H]
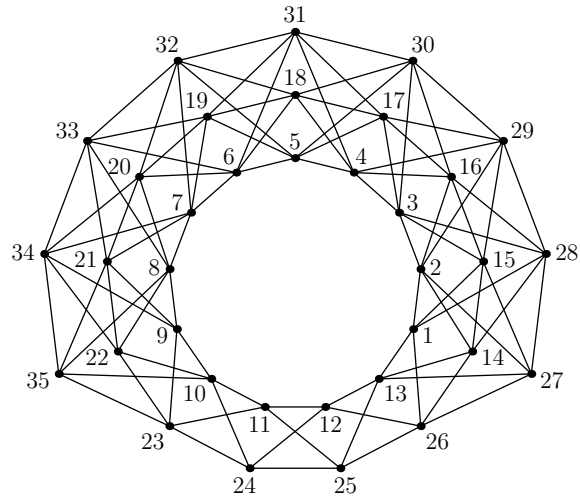

      \begin{asy}
        size(3inch);
        int rotations = 13;
        real orientation = 90-1440/13;
        int vertices = 35;
        int previousLayers = vertices # rotations;

        for (int i=0; i<vertices; ++i)
        {
          int layer = i # rotations;
          pair spot = (2+layer)*dir(orientation+i*360/rotations);

          for (int l=0; l<previousLayers; ++l)
          {
            draw((2+l)*dir(orientation+(i-1)*360/rotations) -- spot);
          }

          dot(spot);
          label(scale(0.75)*Label("$"+(string)(i+1)+"$"), spot, dir(orientation+i*360/rotations));

          previousLayers = vertices # rotations;
          if (i 
          ++previousLayers;
        }
      \end{asy}
      \centering
      \caption{\footnotesize A $C_{2k+1}$-free non-bipartite killer-win graph for each $1 \leq k \leq 5$ of size $\ge 4 \cdot 5 + 6.$}
      \label{fig:bigcyclefree}
    \end{figure}
  \end{enumerate}
\end{proof}

The next proof shows for all $c < 1$ that almost all random graphs are stalemate for the cop and killer game, where each edge occurs with probability $p$ such that $\frac{1}{n^{c}} < p < 1-\frac{1}{n^{c}}$.

\begin{lemma}
  Let $c<1.$ Then almost all random graphs are stalemate for the cop and killer game, where each edge occurs with probability $p$ such that $\frac{1}{n^c} \le p \le 1-\frac{1}{n^c}$.
\end{lemma}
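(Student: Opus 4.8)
The plan is to invoke the contrapositive of Lemma 2(2): if $G$ has no universal vertex and no pair of non-adjacent vertices $u,v$ with $N(v)\subseteq N(u)$, then $G$ is a stalemate. So it is enough to show that each of these two ``obstructions'' is absent in $G(n,p)$ with probability tending to $1$, and to do so with bounds that are uniform over all $p$ in the allowed range $\left[n^{-c},\,1-n^{-c}\right]$. I would handle the two obstructions separately by first-moment/union-bound arguments.

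For the universal vertex, a fixed vertex $w$ is universal with probability $p^{\,n-1}\le\left(1-n^{-c}\right)^{n-1}\le e^{-(n-1)n^{-c}}$, so by a union bound the probability that some vertex is universal is at most $n\,e^{-(n-1)n^{-c}}$; since $c<1$, the exponent $(n-1)n^{-c}=\Theta\!\left(n^{1-c}\right)$ swamps $\ln n$ and this tends to $0$. For the domination obstruction, I would fix an ordered pair of distinct vertices $(u,v)$ and condition on $uv\notin E(G)$. For each of the remaining $n-2$ vertices $w$, the ``bad'' event that $vw\in E(G)$ but $uw\notin E(G)$ has probability $p(1-p)$, and these events are mutually independent because they involve disjoint pairs of potential edges; moreover $N(v)\subseteq N(u)$ holds precisely when none of them occurs, so its conditional probability equals $\left(1-p(1-p)\right)^{n-2}$. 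Using that $n^{-c}\le p\le1-n^{-c}$ forces $p(1-p)\ge\tfrac12 n^{-c}$ once $n$ is large, this is at most $\left(1-\tfrac12 n^{-c}\right)^{n-2}\le e^{-(n-2)n^{-c}/2}$, and a union bound over the fewer than $n^2$ ordered pairs gives probability at most $n^2\,e^{-(n-2)n^{-c}/2}$, which again tends to $0$ since $c<1$.

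Combining the two estimates, with probability $1-o(1)$ a random graph in this range has neither obstruction, hence is a stalemate, which is the claim. The argument is essentially routine once Lemma 2(2) is available; the one place that needs a little care is the uniform lower bound $p(1-p)\ge\tfrac12 n^{-c}$ over the \emph{entire} interval $\left[n^{-c},1-n^{-c}\right]$ (so the estimate does not degrade near either endpoint), together with the observation that $c<1$ is exactly what makes the resulting $n^{O(1)}\cdot e^{-\Theta(n^{1-c})}$ terms vanish — if $c=1$ the union bounds would no longer beat the polynomial factors.
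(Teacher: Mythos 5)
Your proposal is correct and follows essentially the same route as the paper: both apply Lemma 2(2) and bound the probability of a universal vertex by $n p^{n-1}$ and of a dominated non-adjacent pair by $n^2\left(1-p(1-p)\right)^{n-2}$ via union bounds; your write-up merely fills in the quantitative details (the uniform bound $p(1-p)\ge\tfrac12 n^{-c}$ and the role of $c<1$) that the paper leaves implicit.
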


\begin{proof}
  In the last lemma, we proved that if a graph is not stalemate, then it either has an universal vertex or there exist $u,v\in V\left(G\right)$ such that $N\left(v\right)\subseteq N\left(u\right)$ and there is no edge between $u$ and $v$. The probability of a universal vertex is at most $n p^{n-1}$ by the union bound, which has a limit of $0$ as $n \rightarrow \infty$. 

  The probability of having vertices $u,v\in V\left(G\right)$ such that $N\left(v\right)\subseteq N\left(u\right)$ and there is no edge between $u$ and $v$ is at most $n^2 \left(1+p\left(p-1\right)\right)^{n-2}$ by the union bound, which also has a limit of $0$ as $n \rightarrow \infty$. Thus if $s_{n}$ denotes the probability that the random graph on $n$ vertices is stalemate, then $\lim_{n \rightarrow \infty} s_{n} = 1$ for $\frac{1}{n^{c}} < p < 1-\frac{1}{n^{c}}$.
\end{proof}

Next we determine for each $k \geq 0$ if there are any killer-win graphs with $k$ triangles.

\begin{lemma}
  A graph can be killer-win if and only if it has no triangles or exactly $k\ge 3$ triangles.
\end{lemma}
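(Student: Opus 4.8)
The statement is an equivalence about \emph{which} triangle-counts occur: some killer-win graph has exactly $k$ triangles if and only if $k=0$ or $k\ge 3$. So I would prove it in two parts --- \textbf{Achievability} (build killer-win graphs with exactly $k$ triangles for $k=0$ and all $k\ge 3$) and \textbf{Impossibility} (no killer-win graph has exactly one or exactly two triangles).

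For Achievability with $k=0$ there is nothing to do: any tree that is not a star is killer-win by Lemma 1(2) and has no triangle. For $k\ge 3$ I would exhibit explicit graphs, all certified killer-win by one device. Since no player may stay put, $G$ is killer-win as soon as every vertex $p$ has some vertex $q\ne p$ with $N(p)\subseteq N(q)$ (this forces $p\not\sim q$): after the cop shows its start $p$ the killer picks such a $q$, the cop is forced onto some $p'\in N(p)\subseteq N(q)$, and the killer captures by moving to $p'$. Complete multipartite graphs with all parts of size $\ge 2$ satisfy this (two vertices in one part are non-adjacent with equal neighborhoods), which already gives killer-win graphs with $0,8,12,16,18,\dots$ triangles. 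For the remaining values of $k$ --- notably the small ones $k=3,4,5,6,7$ --- I would either delete from such a graph a carefully chosen edge set that destroys exactly the desired number of triangles while creating no leaf or other ``hiding pocket,'' or write down a small ad hoc example; in each case the killer's win is re-established by a two-phase argument: first \emph{herd} the cop from one of the finitely many no-longer-dominated vertices into a dominated one, then finish as above. Every such verification is a finite check on a small graph.

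For Impossibility I would show that with exactly one or exactly two triangles the cop has a strategy that never loses, so $G$ is cop-win or stalemate. The driving observation is that a triangle $\{a,b,c\}$ supplies the cop with a short odd closed walk $a\to b\to c\to a$; since staying is forbidden, walking part of it changes the parity of the number of elapsed turns, and that parity is exactly what a mirroring/dominating killer strategy lives on. So the cop's plan is: move onto a triangle vertex $v$ and thereafter keep to $\{v,a,b\}$ (respectively, to the bounded union of the two triangles); $v$ has ``extra room'' because $a$ and $b$ are mutually adjacent, so whenever the killer manoeuvres to within one step of forcing a capture, the cop slips into $\{a,b\}$ --- a move which either walks the cop onto the killer (the cop wins) or flips the configuration's parity and dissolves the threat. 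With only one or two triangles the killer does not have enough odd structure, in enough places, to trap a cop that commits to living on them; the bookkeeping splits according to the number of triangles and, for two, whether they share an edge (a $K_4-e$), share a vertex, or are disjoint.

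I expect Impossibility to be the main obstacle: it asserts something about \emph{all} graphs with one or two triangles, so it needs a genuinely structural evasion argument rather than case-checking examples, and the two-triangle case is the most delicate since two triangles give the killer strictly more to work with and every way of embedding them must be ruled out. On the Achievability side the only real subtlety is pinning down small killer-win examples for the handful of triangle-counts --- chiefly $k=3$ --- that complete multipartite graphs miss.
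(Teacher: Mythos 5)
Your two-part skeleton matches the paper's (constructions for $k=0$ and $k\ge 3$; a cop evasion strategy for $k=1,2$), and your one-round killer-win criterion is correct: if every vertex $p$ has a $q\neq p$ with $N(p)\subseteq N(q)$ (hence $p\not\sim q$), the killer starts at $q$ and captures after the cop's forced first move. But the achievability half is not actually established. Complete multipartite graphs with all parts of size at least $2$ realize far fewer triangle counts than ``all but a handful'': with three parts the count is a product $abc$ with $a,b,c\ge 2$, and with four or more parts the count is a sum of such products and is at least $32$, so every $k<32$ with at most two prime factors ($3,4,5,6,7,9,10,11,13,14,15,17,19,21,22,23,25,26,29,31$) is missed, and it is not clear the missed set is even finite. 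You give no explicit graph or killer-win verification for any missed value, and the one you flag as ``chiefly'' needed, $k=3$, is genuinely special: the natural chain of three edge-glued triangles has a universal vertex and is cop-win. The paper closes this with a uniform family --- a strip of $k$ triangles glued edge-to-edge with no vertex common to all, properly $3$-coloured, where the killer starts on the cop's colour, keeps matching it while approaching, and corners the cop at an end of the strip (valid for all $k>3$) --- together with a separate six-vertex example for $k=3$ (a pentagon plus a vertex joined to four of its vertices).

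The impossibility half is where your proposal is weakest: ``not enough odd structure'' is a heuristic, not an argument, and the step that makes triangle-confinement work never appears. For exactly one triangle $\{a,b,c\}$ the decisive fact is that every vertex outside it is adjacent to at most one of $a,b,c$ (two adjacencies would create a second triangle); hence the cop, starting on the triangle, can always move to a triangle vertex not adjacent to the killer, and the killer can never enter the triangle without being captured on the cop's next move. The same fact disposes of two triangles sharing no edge. The delicate case is two triangles sharing an edge $uv$ with apexes $x$ and $y$: since $x\not\sim y$, an outside vertex may be adjacent to both $x$ and $y$ without creating a third triangle, so ``at most one neighbour in the safe set'' fails, and a cop who finds herself at $y$ while the killer occupies $x$ loses outright --- her only legal moves, $u$ and $v$, are both adjacent to $x$. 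The paper's proof handles exactly this by having the cop prefer the shared edge $\{u,v\}$ and step onto $x$ or $y$ only when the killer is adjacent to neither, so the fatal $x$-versus-$y$ configuration never arises; no parity bookkeeping substitutes for this. Your sketch names the case split but supplies neither the counting lemma nor this manoeuvre, so as written it would not survive the shared-edge case.
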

\begin{proof}
  $C_4$ is a killer-win graph with no triangles.\\
  
  If there is exactly one triangle, the cop could pick a vertex in it and stay in the triangle. The moment the killer enters the triangle he loses. Moreover he is adjacent to at most one vertex, so the cop could always move to a vertex in the triangle not adjacent to the killer.\\

  If there are two triangles and they share a side, let it be $\left(u,v\right)$ and the other two vertices in the triangles are $x$ and $y$. Define $A=\{u,v,x,y\}$. The cop picks $u$. When the killer is outside of the triangles, he is adjacent to at most a vertex in $A$, so the cop could always move to another vertex in $A$. The only possible way for the killer to enter $A$ is to move to $x$ when the cop is at $y$ or vice versa. Clearly, the cop could have moved between $u$ and $v$ instead, and thus the killer would not move to $x$ or $y$.\\

  If the two triangles do not share a side, then similar to the one-triangle situation, the cop could always move to a vertex in the same triangle not adjacent to the killer.\\

  We provide a killer-win graph with three triangles. Consider a pentagon and an extra vertex $x$. Exactly four of the vertices in the pentagon are adjacent to $x,$ as in Figure \ref{fig:three3s}. It is not hard to verify that this is killer-win.\\
  \begin{figure}[H]
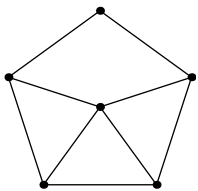

    \begin{asy}
      size(1inch);
      //string names[]={"$v$","$u$","$y$","$z$","$w$"};
      for (int i=0; i<5; ++i)
      {
        dot(dir(90+72*i));
        draw(dir(90+72*i)--dir(162+72*i));
        if (i>0)
        {
          draw((0,0)--dir(90+72*i));
        }
        //label(names[i],dir(90+72*i),dir(90+72*i));
      }
      dot((0,0));
      //label("$x$",(0,0),N);
    \end{asy}
    \centering
    \caption{\footnotesize A killer-win graph with three triangles.}
    \label{fig:three3s}
  \end{figure}

  Finally, for $k>3$, we can construct $k$ triangles $T_1,\ldots,T_k$, such that
  \begin{itemize}[noitemsep,topsep=0pt]
    \item [$\bullet$] $T_{i}$ and $T_{i+1}$ share a side for all $1 \le i < k,$
    \item [$\bullet$] there are no more common sides, and 
    \item [$\bullet$] there is no common vertex to all $k$ triangles.   
  \end{itemize}

    We then 3-color this graph, as shown in Figures $\ref{fig:many3s}$ and $\ref{fig:worm}.$ The killer can always pick a vertex of the same color as the cop. In subsequent steps the killer should move towards the cop, but matching the cop's color at every step.  The cop will eventually be cornered at $T_1$ or $T_k.$
\end{proof}
\begin{figure}[H]
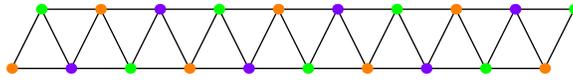

  \centering
  \begin{asy}
    size(3inch);
    int n = 20;
    pen colors[] = { orange, green, purple };
    
    for (int i = 0; i < n; ++i)
    {
      if (i < n-2)
      draw(((i+1)/2.0, (i+1)
      if (i == n-2)
      draw(((i+1)/2.0, (i+1)
      dot((i/2.0, i
    }
  \end{asy}
  \caption{\footnotesize A killer-win graph with more than 3 triangles.}
  \label{fig:many3s}
\end{figure}
\begin{figure}[H]
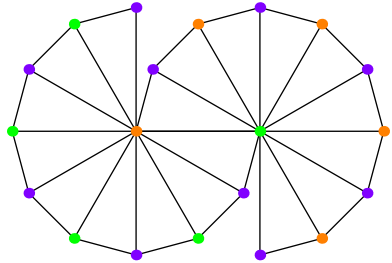

  \centering
  \begin{asy}
    size (2inch);
    pair A = (0,0);
    pair B = (1,0);
    pen Acolors[] = { green, purple };
    pen Bcolors[] = { orange, purple };
    //pen Acolors[] = { black };
    //pen Bcolors[] = { black };
    
    void pacmanlines (pair center, real start, real n, real increment, real radius=1.0)
    {
      pair previous = (0,0);
      for (int i=0; i<n; ++i)
      {
        pair edge = center + radius*expi(start+i*increment);
        draw(center--edge);
        if (i>0)
        draw(previous -- edge);
        previous = edge;
      }
    }
    
    void pacmandots (pair center, real start, real n, real increment, pen colors[], real radius=1.0)
    {
      for (int i=0; i<n; ++i)
      {
        pair edge = center + radius*expi(start+i*increment);
        pen color = colors[i 
          dot(edge, color+4bp);
        }
      }
      
      pacmanlines(A, 0, 10, -pi/6);
      pacmanlines(B, -pi, 10, -pi/6);
      pacmandots(A, 0, 10, -pi/6, Acolors);
      pacmandots(B, -pi, 10, -pi/6, Bcolors);
      
      //label("C", (0,1), N);
      //label("K", (1,0), dir(220));
  \end{asy}
  \caption{\footnotesize Another killer-win graph with more than 3 triangles.}
  \label{fig:worm}
\end{figure}
From previous results we have:
\begin{lemma}
  There are infinite graphs that are cop-win, killer-win, or stalemate.
\end{lemma}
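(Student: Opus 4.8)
The plan is to exhibit one explicit infinite graph realizing each of the three outcomes, reusing the arguments already established for finite graphs. For the cop-win case I would take the infinite star $K_{1,\infty}$, a center $z$ joined to countably many leaves. Since $z$ is a universal vertex, $K_{1,\infty}$ is cop-win for the same trivial reason finite graphs with a universal vertex are: the cop starts at $z$, on her first turn she is forced to move to a leaf, and she simply moves to the leaf the killer occupies. For the stalemate case I would take the two-way infinite path $P_\infty$ on vertex set $\mathbb{Z}$. Exactly as in the argument for cycles of length at least five, whenever the two players are at distance $2$ the player to move must go to a vertex at distance $1$ or $3$ from the opponent; moving to distance $1$ concedes capture on the next turn, and since $P_\infty$ is unbounded in both directions there is always a vertex at distance $3$ to retreat to. So each player can flee forever and neither can force a capture: $P_\infty$ is a stalemate.

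The killer-win case needs more care, and this is where the real work lies. One cannot simply invoke ``every non-star tree is killer-win,'' since that argument quietly uses finiteness — the non-star tree $P_\infty$ above is only a stalemate, because the cop escapes to infinity. What is needed is an infinite graph of bounded radius, so that the cop stays boxed in. My choice is the infinite double star $D_\infty$: two adjacent vertices $u$ and $v$, with $u$ joined to countably many leaves $a_1, a_2, \dots$ and $v$ joined to countably many leaves $b_1, b_2, \dots$. Its radius is $2$, so the cop has no infinite ray to run along. I would have the killer start at a vertex at distance exactly $2$ from the cop on the opposite ``side'': if the cop picks $u$ (respectively a leaf of $u$) the killer picks a leaf of $v$ (respectively $v$ itself), and symmetrically when the cop starts on the $v$-side. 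A short case check — four cases according to whether the cop's starting vertex is $u$, $v$, a leaf of $u$, or a leaf of $v$, the last two symmetric to the first two — then shows that, since staying put is not allowed, from a leaf the cop must walk back to $u$ or $v$, and the killer captures within at most two of his own moves. Hence $D_\infty$ is killer-win. An infinite spider all of whose legs have length two, or infinitely many $4$-cycles glued at a common vertex, would serve equally well and for the same reason.

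The main obstacle is thus not any single computation but the realization that the finite theory does not transfer wholesale: a universal vertex and the bi-infinite line behave exactly as before, but ``non-star tree $\Rightarrow$ killer-win'' fails for infinite trees, so one must hand-pick a bounded-radius killer-win example and verify it. Everything else reduces to the same cornering argument used to show $P_4$ is killer-win.
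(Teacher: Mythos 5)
Your three examples are correct, and two of them (the infinite star for cop-win and the two-way infinite path for stalemate) coincide with the paper's own choices; the only genuine divergence is the killer-win witness. The paper simply reuses its finite-path argument on a \emph{one-way} infinite path (a ray): the killer starts two steps from the cop on the side away from the endpoint and pushes her into the end, exactly as on $P_n$ with $n\ge 4$, so the cornering argument does transfer once the infinite path has an end. Your observation that ``every non-star tree is killer-win'' fails for infinite trees is correct and worth making explicit --- the two-ended path $P_\infty$ is indeed only a stalemate, and the paper's terse statement ``paths with more than three vertices are killer-win'' is only right for the one-ended ray --- but it does not force you to abandon paths altogether. Your replacement, the infinite double star of radius $2$ with the killer starting at distance $2$ on the opposite side, is verified correctly (the forced-move rule sends the cop from a leaf back to $u$ or $v$, where she is captured within two killer moves), so your proof is complete. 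What your route buys is robustness: the bounded-radius example makes it impossible for the cop to escape to infinity and the case check is finite and explicit, whereas the paper's route buys brevity by leaning on the finite-path chase argument and leaving the ray-versus-line distinction to the figure.
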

\begin{proof}
  Stars are cop-win, paths with more than three vertices are killer-win, and cycles with more than four vertices are stalemate, as seen in Figure \ref{fig:infinte}.
  \begin{figure}[H]
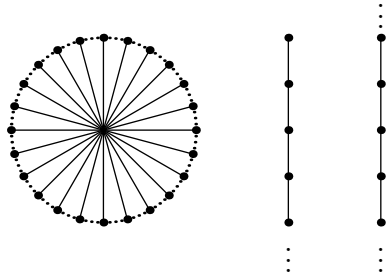

    \centering
    \begin{asy}
      size(2inch);
      int steps = 12;
      for (int i = 0; i < 2*steps; ++i)
      {
        draw((0,0)--expi(i*pi/steps));
        pen dotpen = currentpen;
        for (int d=0; d<4; ++d)
        {
          dot(expi((i+d/4)*pi/steps), dotpen);
          dotpen=currentpen+1bp;
        }
      }
      draw((2,-1)--(2,1));
      draw((3,-1)--(3,1));
      label("$\vdots$",(3,-1),S); label("$\vdots$",(3,1),N);
      label("$\vdots$",(2,-1),S);
      
      int num=4;
      for (real y=-1; y<=1; y+=2/num)
      {
        dot((2,y));
        dot((3,y));
      }
    \end{asy}
    \caption{\footnotesize Infinite graphs that are cop-win, killer-win, and stalemate.}
    \label{fig:infinte}
  \end{figure}
  
\end{proof}

Unlike triangles, longer cycles permit cop-win and killer-win graphs.
\begin{lemma}
  For $\left(m,n\right)\neq\left(1,5\right)$ and $n\geq4$, there are cop-win and killer-win graphs with $m$ $C_n$s.
\end{lemma}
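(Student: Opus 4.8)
I would split the statement into its two halves — producing cop‑win graphs and producing killer‑win graphs with exactly $m$ copies of $C_n$ — and, for the killer‑win half, split further on the parity of $n$.

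\emph{Cop‑win graphs.} For every $n\ge 4$ and $m\ge 1$ I would take the fan $F_{m,n}$: a path on $m+n-2$ vertices together with one extra vertex $u$ joined to all of them. Since $u$ is universal, $F_{m,n}$ is cop‑win by part (1) of the first lemma. As the path is acyclic, every cycle of $F_{m,n}$ runs through $u$, and an $n$‑cycle through $u$ is determined by an unordered pair of path‑vertices at distance $n-2$ along the path; there are exactly $(m+n-2)-(n-2)=m$ such pairs, so $F_{m,n}$ has exactly $m$ copies of $C_n$. In particular this produces a cop‑win graph with a single $C_5$, so the exclusion $(m,n)=(1,5)$ will only be needed on the killer side.

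\emph{Killer‑win graphs, $n$ even.} Writing $n=2\ell$, I would take the ladder $P_2\,\square\,P_k$ with $k=m+\ell-1$. Every cycle of a ladder spans a block of consecutive columns and has length twice the number of columns it spans, so its $n$‑cycles are exactly those spanning $\ell$ columns, and there are $k-\ell+1=m$ of them. When $k\ge 3$ — which fails only for $(m,n)=(1,4)$ — the ladder is a grid graph larger than $1\times 3$ and hence killer‑win by part (4) of the first lemma; for $(m,n)=(1,4)$ one just takes $C_4$, killer‑win by part (3).

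\emph{Killer‑win graphs, $n$ odd.} Here $n\ge 5$ and, by hypothesis, $m\ge 2$ when $n=5$; this is the substantive case. The obstruction is that a lone $C_n$ with $n\ge 5$ is a stalemate, and if the $n$‑cycles are attached to one another (or to a killer‑win graph) too loosely — through a cut vertex, a bridge, or as a pendant ``ear'' — the cop can retreat onto a single copy and orbit it forever; this is exactly why $(m,n)=(1,5)$ is impossible. For $m\ge 2$ I would instead build a tightly interlocked \emph{strip} of $n$‑cycles $Q_1,\dots,Q_m$ in which $Q_i$ and $Q_{i+1}$ share exactly one edge, consecutive shared edges are disjoint inside each $Q_i$, and there is no other overlap. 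A short case check (each $Q_i\cup Q_{i+1}$ is a theta graph with arc lengths $1,n-1,n-1$, whose only extra cycle has length $2n-2>n$, and cycles spanning three or more of the $Q_i$ are longer still) shows the strip has exactly $m$ copies of $C_n$. Then, as in the proof of the ``exactly $k\ge 3$ triangles'' lemma, I would properly $3$‑colour the strip and let the killer start on a vertex of the cop's colour and each turn move toward the cop while matching the cop's new colour, forcing the cop to be cornered at $Q_1$ or $Q_m$. For the remaining sub‑case $n\ge 7$, $m=1$, I would take a single $C_n$ and augment it locally — adding one or two chords (or a small attached gadget) positioned so that every cycle of length $\ge 5$ other than the original $C_n$ is broken and no new $C_n$ is created — so that the cop has no cyclic orbit left on which to hide, and then verify killer‑win by hand on this small graph.

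\emph{Main obstacle.} The delicate point is verifying the killer‑win property of the odd‑$n$ gadgets: one must check that on the $n$‑cycle strip the colour‑matching/cornering strategy is genuinely maintainable (so proper $3$‑colourability together with enough local freedom to match the cop's colour at each step is what must be argued), and that in the augmented single‑cycle gadget the cop really is denied any cyclic safe haven — the same phenomenon that makes $(m,n)=(1,5)$ a true exception rather than an artefact of the construction.
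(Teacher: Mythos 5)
Your cop-win fan (universal vertex plus a path, giving exactly $m$ copies of $C_n$) is correct and essentially the paper's idea, and your even-$n$ ladder $P_2\,\square\,P_k$ is a correctly counted, genuinely different construction (the paper never splits by parity). The gap is the odd-$n$ killer-win case, which is the heart of the lemma. A strip of $n$-cycles $Q_1,\dots,Q_m$ in which consecutive cycles share exactly one edge is, for $n\ge 5$, not killer-win at all: every cycle of that graph has length at least $n\ge 5$, so it contains no $C_4$; hence there is no non-adjacent pair $u,v$ with $N\left(v\right)\subseteq N\left(u\right)$ (every vertex has degree at least $2$, and such a pair would force a $4$-cycle through $u,v$ and two common neighbours), and there is clearly no universal vertex. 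By part (2) of the paper's second lemma (equivalently, by the direct observation that a capture requires the victim's every forced move to land in the captor's neighbourhood), such a graph is a stalemate. Already for $m=2$, $n=5$ — two pentagons sharing an edge — the killer can never reach a position from which he can finish, so your colour-matching/cornering strategy cannot terminate in a capture. The triangle-strip argument you invoke does not transfer: it relies on the triangles themselves (adjacent same-coloured "clusters" dominating the cop's exits at the end of the strip), not merely on proper $3$-colourability. Your remaining sub-case $m=1$, $n\ge 7$ ("add a chord or small gadget and verify by hand") is not a construction or a proof, so the whole odd case is open in your write-up.

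For comparison, the paper sidesteps $n$-cycle strips entirely: for every $n>4$ it takes a strip of $m+n-3$ triangles glued edge-to-edge (consecutive triangles share a side, no other shared sides, no vertex common to all). The copies of $C_n$ in that strip are exactly the cycles spanning $n-2$ consecutive triangles, so there are $\left(m+n-3\right)-\left(n-2\right)+1=m$ of them, and killer-win follows from the same $3$-colouring strategy already used in the "exactly $k\ge 3$ triangles" lemma. The exclusion $\left(m,n\right)=\left(1,5\right)$ is precisely the degenerate strip of $3$ triangles, which necessarily has a vertex common to all three (indeed a universal vertex), so the construction fails only there; $n=4$ is handled by $m$ disjoint copies of $C_4$. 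If you want to salvage your approach, replace the odd-$n$ cycle strip by this triangle strip (or prove killer-win for some other graph whose only cycles of length $n$ number exactly $m$); as it stands the key step fails.
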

\begin{proof}
  To make a cop-win graph with $m$ $C_n$s, consider $m$ $C_n$s. Each pair of $C_n$s share a fixed vertex $v$, and $v$ is also the universal vertex within each $C_n,$ as in Figure \ref{fig:petal}. The cop picks $v$ and wins.
  
  \begin{figure}[H]
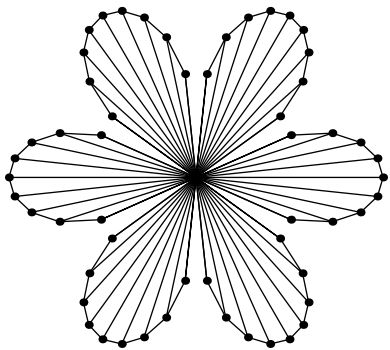

    \centering
    \begin{asy}
      size(2inch);
      
      pair f (real theta)
      { 
        //return (cos(4*theta) + cos(2*theta), sin(4*theta) - sin(2*theta));
        return (sqrt(abs(cos(3*theta))) * cos(theta),
        sqrt(abs(cos(3*theta))) * sin(theta));
      }
      for (int i=0; i<30; ++i)
      {
        pair a = f((i+1)*pi/30);
        pair b = f(i*pi/30);
        pair c = -b;
        pair d = -a;
        draw(a--b--c--d);
        dot(b);
        dot(c);
      }
      
    \end{asy}
    \caption{\footnotesize A cop-win graph with multiple cycles of length $n\ge4$}
    \label{fig:petal}
  \end{figure}

  We now construct a killer-win graph with $m$ $C_n$s. 
  \begin{enumerate}
  \item If $n=4$, take $n$ disconnected copies of $C_4$ as shown in Figure \ref{fig:many4s}.
    \begin{figure}[H]
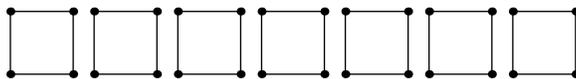

      \centering
      \begin{asy}
        size(3inch);
        int m = 7;
        //pen colors[] = { orange, green, purple };
        void fourgon (pair position, real sidelength=0.75)
        {
          pair vertices[] = { position, position+(sidelength,0), 
            position+(sidelength,sidelength), position+(0, sidelength) };
          draw(vertices[0]--vertices[1]--vertices[2]--vertices[3]--cycle);
          for (int v=0; v<4; ++v)
          dot(vertices[v]);
        }
        
        for (int i=0; i<m; ++i)
        {
          fourgon((i,0));
        }
      \end{asy}
      \caption{\footnotesize A killer-win graph with multiple cycles of length $n=4$}
      \label{fig:many4s}
    \end{figure}

  \item If $n>4$, take triangles $T_1,\ldots,T_{m+n-3}$, such that
    \begin{itemize}[noitemsep,topsep=0pt]
    \item [$\bullet$] $T_{i}$ and $T_{i+1}$ share a side for all $1 \le i < m+n-3,$
    \item [$\bullet$] there are no more common sides, and 
    \item [$\bullet$] there is no common vertex to all $(m+n-3)$ triangles.     
    \end{itemize}
    We then 3-color this graph, as shown in Figure $\ref{fig:manycycles}.$ The killer can always pick a vertex of the same color as the cop. In subsequent steps the killer should move towards the cop, but matching the cop's color at every step. The cop will eventually be cornered.
    \begin{figure}[H]
      \centering
      \begin{asy}
        size(3inch);
        int n = 24;
        pen colors[] = { orange, green, purple };
        
        for (int i = 0; i < n; ++i)
        {
          if (i < n-2)
          draw(((i+1)/2.0, (i+1)
          if (i == n-2)
          draw(((i+1)/2.0, (i+1)
          dot((i/2.0, i
        }
      \end{asy}
      \caption{\footnotesize A killer-win graph with multiple cycles of length $n\ge5$}
      \label{fig:manycycles}
    \end{figure}
  \end{enumerate}
\end{proof}
The Cartesian product of two graphs turns out to be a stalemate graph in most cases.
\begin{lemma}
  Let $G$ and $H$ be connected graphs such that $|H|>1$ and $G$ is not a tree, then $G\square H$ is stalemate.
\end{lemma}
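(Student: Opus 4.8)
The plan is to prove separately that $G\square H$ is not cop-win and not killer-win; since the cop--killer game is symmetric apart from who moves first, essentially one strategic idea does both jobs. As $G$ is connected and not a tree, it contains a cycle $C$; fix one, and record three facts: $|V(C)|\ge 3$, every vertex of $C$ has exactly two neighbors in $C$, and $H$ (connected with $|H|>1$) has no isolated vertex. In each of the two arguments I will make the player I am defending keep its $G$-coordinate on $C$ at all times and maintain, at the start of each of its own turns, distance at least $2$ in $G\square H$ from its opponent; keeping the $G$-coordinate on $C$ is what guarantees enough legal moves, and this is where ``$G$ not a tree'' and ``$|H|>1$'' enter. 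Recall that $d_{G\square H}\big((a,x),(b,y)\big)=d_G(a,b)+d_H(x,y)$.

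The technical heart is the following claim: \emph{if player $X$ stands at $(a,x)$ with $a\in V(C)$, the opponent stands at $(b,y)$, it is $X$'s turn, and $d_{G\square H}\big((a,x),(b,y)\big)\ge 2$, then $X$ has a legal move --- i.e.\ to a vertex other than $(a,x)$ --- to some $(a',x')$ with $a'\in V(C)$ and $d_{G\square H}\big((a',x'),(b,y)\big)\ge 2$.} I would prove this by cases on $d_H(x,y)$. If $d_H(x,y)=0$, then $x=y$ and $d_G(a,b)\ge 2$: move to $(a,x')$ for any $H$-neighbor $x'$ of $x$, so the distance becomes $d_G(a,b)+1\ge 3$. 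If $d_H(x,y)=1$, then $d_G(a,b)\ge 1$: move along $C$ to a $C$-neighbor $a''$ of $a$ with $a''\ne b$ (one exists, since $a$ has two $C$-neighbors), so the distance becomes $d_G(a'',b)+1\ge 2$ because $a''\ne b$. If $d_H(x,y)\ge 2$ and $d_G(a,b)\ge 1$: move to $(a,x')$ for any $H$-neighbor $x'$ of $x$, giving distance $d_G(a,b)+d_H(x',y)\ge 1+\big(d_H(x,y)-1\big)\ge 2$. And if $d_H(x,y)\ge 2$ and $d_G(a,b)=0$ (so $a=b$): move along $C$ to any $C$-neighbor $a''$ of $a$, giving distance $d_G(a'',a)+d_H(x,y)\ge 1+2=3$. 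These cases exhaust all positions with $d_G(a,b)+d_H(x,y)\ge 2$; in each the target differs from $(a,x)$, so the move is legal under the ``must move'' rule, and the $G$-coordinate stays on $C$.

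Granting the claim, both halves are short inductions. \emph{Not cop-win:} against the cop's opening vertex $c_0=(g_0,h_0)$, the killer picks $k_0\ne c_0$ with $G$-coordinate on $C$ and $d_{G\square H}(c_0,k_0)\ge 2$ --- either $(a_0,h_0)$ for some $a_0\in V(C)$ with $d_G(g_0,a_0)\ge 2$, or, if no such $a_0$ exists, $(a_0,h')$ for some $a_0\in V(C)$ with $g_0\sim a_0$ (one exists because $|V(C)|\ge 3$) and some $H$-neighbor $h'$ of $h_0$. Thereafter the invariant ``it is the cop's turn, the killer's $G$-coordinate is on $C$, and $d_{G\square H}\ge 2$'' is kept: after the cop moves the distance is at least $1$, and either it is exactly $1$, in which case the killer steps onto the cop and the cop has not won, or it is at least $2$, in which case the claim lets the killer restore the invariant. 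So the cop never occupies the killer's vertex, and $G\square H$ is not cop-win. \emph{Not killer-win:} the cop opens at any $c_0$ with $G$-coordinate on $C$; the killer then replies with some $k_0\ne c_0$, and if $d_{G\square H}(c_0,k_0)=1$ the cop captures it on her first move, while otherwise the cop applies the claim on each of her turns to keep $d_{G\square H}\ge 2$ at the start of her turns, so after every killer move the distance is at least $1$ and the killer never occupies the cop's vertex. Being neither cop-win nor killer-win, $G\square H$ is stalemate.

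The main obstacle --- and the reason for routing everything through a cycle rather than working in $G\square H$ directly --- is that the bare ``stay at distance $\ge 2$'' maneuver really can get stuck: if $a$ is a leaf of $G$ and $x$ a leaf of $H$, then $(a,x)$ has degree $2$, its two neighbors are non-adjacent, and relative to a suitable opposing vertex at distance $2$ both of those neighbors lie within distance $1$ of the opponent. (In fact $G\square H$ can then contain a dominated non-adjacent pair, so the structural criterion from the earlier lemma does not apply here.) Restricting a player's $G$-coordinate to a cycle kills all leaves in that coordinate, which is precisely what makes the case analysis above go through; the one thing to check carefully is that this case analysis is exhaustive and that the moves it calls for always exist.
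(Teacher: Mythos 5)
Your proof is correct and takes essentially the same route as the paper: since $G$ is not a tree it contains a cycle $C$ of length at least $3$, and the defending player keeps its $G$-coordinate on $C$ while perpetually dodging so as to stay non-adjacent to (at distance $\ge 2$ from) the opponent. You organize the same dodging argument via the distance formula $d_{G\square H}=d_G+d_H$ and spell out the initial placements and both directions (not cop-win, not killer-win) more explicitly than the paper does, but the key idea is identical.
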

\begin{proof}
  We use vertex pair $\left(u,v\right)$ to represent a vertex in $G\square H$ for $u\in V\left(G\right), v\in V\left(H\right)$. For distinct $u_1,u_2\in V\left(G\right)$, $v_1,v_2\in V\left(H\right)$, $\left(u_1,v_1\right)$ is not adjacent to $\left(u_2,v_2\right)$ in $G\square H$, which therefore has no universal vertex.\\

  Then it suffices to show that given two players that are not adjacent to each other, the next player could always move to another vertex such that they are still not adjacent. We prove a stronger condition, that both players can stay in a cycle of length $3$ or more in $G$ forever. Let the players be at $\left(u_1,v_1\right)$ and $\left(u_2,v_2\right)$, and assume that the former player moves next. If $u_1=u_2$, then $v_1$ and $v_2$ are not equal or adjacent, so she could move to $\left(u_3,v_1\right)$ for any $u_3$ in the cycle in $G$ that is adjacent to $u_1$. If $u_1$ is adjacent to $u_2$ in $G$, then $v_1 \neq v_2$. Thus the former player can still move to $\left(u_3,v_1\right)$ where $u_3$ is in the same cycle in $G$ and $u_3 \neq u_2$. If $u_1$ is not adjacent to $u_2$, then she can move to $\left(u_{1},v'\right)$ where $v'$ may be identical to $v_1$.
\end{proof}

We find a similar kind of result for tensor products.

\begin{lemma}
  For any two graphs $|G|>1$ and $|H|>1,$ $G\times H$ is a stalemate if and only if both are not killer-win, and is killer-win if and only if at least one is killer-win.
\end{lemma}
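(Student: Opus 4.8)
The plan is to use the fact that in this game a move must go to an \emph{adjacent} vertex --- staying put is not allowed --- so a move in $G\times H$ changes \emph{both} coordinates, each along an edge. Hence any play in $G\times H$ projects to a play of the cop--killer game on $G$ (the first coordinates) and to one on $H$ (the second coordinates); in each projected play both players move every round, so it is a legal play of the relevant game, \emph{except} that the two starting vertices may coincide in one of the two coordinates. Moreover the cop lands on the killer in $G\times H$ at a given round exactly when she lands on him in both projected plays at that round, and symmetrically for the killer. I would first record a consequence: $G\times H$ is never cop-win. Indeed, the killer may open with the same first coordinate as the cop (choosing the second coordinate to differ, possible since $|H|>1$) and thereafter copy the cop's first-coordinate move every round, each time filling in the second coordinate with any neighbor of his current one (possible as $H$ has no isolated vertices). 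Then the killer's first coordinate always equals the cop's, so the cop can never move onto him. It remains to decide when $G\times H$ is killer-win versus stalemate.

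For the ``if'' direction, suppose without loss of generality that $G$ is killer-win. The killer's strategy on $G\times H$ is to \emph{shadow} the cop in the second coordinate while executing a killer-winning strategy of $G$ in the first. Concretely, he opens at $(x_0,b_0)$, where $x_0$ is the $G$-strategy's response to the cop's first coordinate $a_0$ (so $x_0\neq a_0$, making this a legal opening on $G\times H$); after each cop move he resets his second coordinate to the cop's current second coordinate --- legal because the cop just moved along an $H$-edge, so her new $H$-vertex is adjacent to the killer's current one --- and moves his first coordinate as the $G$-strategy dictates. The cop's first-coordinate play is a legal play on $G$, so the $G$-strategy eventually places the killer's first coordinate onto the cop's; at that round the second coordinates already coincide, so the killer has captured the cop. (Along the way the cop cannot capture the killer first, since that would force her new $H$-vertex to equal her previous one.) By symmetry $H$ killer-win also forces $G\times H$ killer-win.

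For the converse, suppose neither $G$ nor $H$ is killer-win. Then in the cop--killer game on $G$ the cop has a strategy, started from some vertex $v_G$, that avoids capture forever against every killer play beginning at a vertex other than $v_G$; choose $v_H$ similarly for $H$. I would have the cop open at $(v_G,v_H)$ on $G\times H$. After the killer picks a starting vertex $(x_0,y_0)\neq(v_G,v_H)$, at least one of $x_0\neq v_G$ and $y_0\neq v_H$ holds. If $x_0\neq v_G$, the cop plays her safe $G$-strategy in the first coordinate (the killer's first-coordinate play is a legal $G$-play with a distinct start, so the strategy applies) and moves arbitrarily in the second; then the killer never lands on her in the first coordinate, hence never in $G\times H$. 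If instead $x_0=v_G$ (so $y_0\neq v_H$), she defends symmetrically in the second coordinate. Either way the cop survives forever, so $G\times H$ is not killer-win; together with the fact that it is never cop-win, this makes it a stalemate. Combining the two directions yields both equivalences of the lemma.

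The step I expect to be the crux is the starting-vertex bookkeeping: ``$G$ not killer-win'' only supplies a safe cop strategy against killer $G$-plays whose start differs from hers, so the cop must commit in advance to good starting vertices in \emph{both} coordinates and then respond in whichever coordinate --- there is at most one --- the killer chooses to start ``on top of'' her; dually, it is exactly this freedom to absorb the starting mismatch in one coordinate that lets a killer winning on $G$ (say) still win on $G\times H$. I would also carry along the standing hypothesis that $G$ and $H$ have no isolated vertices (for instance, that they are connected), so that every position of $G\times H$ admits a legal move; the statement is false without it.
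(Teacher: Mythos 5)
Your proof is correct and follows essentially the same route as the paper's: view the tensor product as two cop--killer games played in parallel, have the killer shadow one coordinate (so the cop can never win, and a killer-win factor lifts to a win), and have the cop play coordinate-wise non-losing strategies when neither factor is killer-win. Your version just spells out the starting-vertex bookkeeping and the no-isolated-vertex hypothesis that the paper's terser argument leaves implicit.
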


\begin{proof}
  It is helpful to think of the tensor product of two graphs as two separate games being played simultaneously. For the first claim, let the cop choose $\left(g,h\right)\in G\times H$ to start. The killer cannot lose by choosing $\left(g,h'\right)\in G\times H$ and matching the cop's first coordinate every move thereafter. The cop cannot lose by playing optimally on both coordinates, as the killer cannot match both coordinates.

  For the second claim, without loss of generality suppose that $H$ is killer-win. Let the cop choose $\left(g,h\right)\in G\times H$ to start. The killer wins by choosing $\left(g,h'\right)\in G\times H$ where $h'$ is the optimal response to $h$ in $H,$ and matching the cop's first coordinate every move thereafter while playing optimally on the second coordinate.
\end{proof}

As with Cartesian products and tensor products, we can also determine the result of the cop and killer game for strong products based on the graphs in the product.

\begin{lemma}
  Let $G$ and $H$ be connected graphs where $G$ has no universal vertices and $|H| > 1.$ Then $G\boxtimes H$ is a stalemate.
\end{lemma}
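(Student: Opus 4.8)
The plan is to prove the stronger statement that neither player has a winning strategy, by giving each player a strategy that keeps the two tokens at distance at least $2$ in $G\boxtimes H$ after every one of that player's moves. Two structural facts about the strong product are the whole engine. First, distances in $G\boxtimes H$ satisfy $d_{G\boxtimes H}((u_1,v_1),(u_2,v_2))=\max(d_G(u_1,u_2),d_H(v_1,v_2))$; in particular two vertices are distinct and non-adjacent exactly when their $G$-coordinates are at distance $\geq 2$ in $G$ or their $H$-coordinates are at distance $\geq 2$ in $H$. Second, from any vertex $(u,v)$ one legal move keeps the $G$-coordinate fixed and moves the $H$-coordinate to a neighbor of $v$, and another keeps $v$ fixed and moves $u$; since $G$ and $H$ are connected with more than one vertex, such coordinate-only moves always exist. (Also, the closed neighborhood of $(u,v)$ is $N_G[u]\times N_H[v]$, so $G\boxtimes H$ inherits from $G$ the absence of a universal vertex.)

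The key lemma I would isolate is a self-sustaining safety invariant: whenever it is a player's turn and that player's token is at distance $\geq 2$ from the opponent's token, the player can move so that afterwards the distance is still $\geq 2$. This is immediate from the two facts: by the $\max$ formula either $d_G\geq 2$ or $d_H\geq 2$; in the first case the player makes a coordinate-only move in $H$, leaving $d_G$ unchanged and hence still $\geq 2$, and in the second case a coordinate-only move in $G$. Since a single move changes a distance by at most $1$, a player who begins each of her turns at distance $\geq 2$ from the opponent can never be caught (the opponent cannot step onto her), and also never catches the opponent.

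The proof then splits into two halves. For \emph{not killer-win}: unless the killer's initial vertex is adjacent to the cop's --- in which case the cop captures on her first move --- the cop begins her first turn at distance $\geq 2$ from the killer, applies the key lemma every turn, and so is never caught. For \emph{not cop-win}: this is where the hypothesis enters. Since $G$ has no universal vertex, for whatever vertex $(u_0,v_0)$ the cop starts on there is a vertex $u'$ of $G$ with $d_G(u_0,u')\geq 2$, so the killer may start at $(u',v_0)$, at distance $\geq 2$ from the cop; thereafter the killer applies the key lemma on each of his turns, so the cop never catches him (a cop move landing adjacent to the killer only lets the killer capture her, so the cop gains nothing). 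With both players following these survival strategies the game runs forever with no capture, which is exactly a stalemate.

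The only genuine idea, and hence the main obstacle, is recognizing that a coordinate-only move lets a token effectively pass inside whichever of $G$ or $H$ it is currently safe in; this is what makes the invariant of remaining at distance $\geq 2$ self-sustaining even when one factor by itself (say a path or tree $G$) would let the mover be cornered. The one other thing to verify is that the no-universal-vertex hypothesis is used exactly once --- to give the killer a legal non-losing opening reply to an arbitrary cop start --- and that everything else, including the whole ``not killer-win'' direction, is independent of it.
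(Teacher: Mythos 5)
Your argument is correct, but it is not the route the paper takes. The paper proves this lemma by pure structure-checking: it verifies that $G\boxtimes H$ has no universal vertex and no pair of non-adjacent vertices $u,v$ with $N(v)\subseteq N(u)$, and then invokes its earlier Lemma 2(2) (a non-stalemate graph must have one of these two features), so no explicit strategies ever appear. You instead give a self-contained dynamic argument: the observation that adjacency in the strong product forces $d_{G\boxtimes H}=\max(d_G,d_H)$, together with coordinate-only moves, yields a self-sustaining ``stay at distance $\geq 2$'' invariant for whichever player currently enjoys it, and the no-universal-vertex hypothesis is spent exactly once, on the killer's opening reply $(u',v_0)$. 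The trade-off: the paper's proof is shorter because the real work (the dodging argument) was already done once, in the proof of Lemma 2(2), and is reused here; yours is longer but exhibits explicit non-losing strategies for both players and makes transparent why each hypothesis is needed, without appealing to the characterization lemma. One small point you should make explicit: your key lemma assumes distance $\geq 2$ at the start of the player's turn, but the opponent's intervening move can bring the distance down to exactly $1$; in that case the player on move simply captures (so the cop is still never caught in the first half, and the killer is still never caught in the second), which is the missing half-sentence that closes the induction. With that remark added, the proposal is a complete and valid alternative proof.
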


\begin{proof}~
	\begin{itemize}[noitemsep,topsep=0pt] 
	  	\item There are no universal vertices, because for any vertex $(g,h) \in V\left(G\boxtimes H\right)$ there is a vertex $g' \in V\left(G\right)$ such that $g'$ is not adjacent to $g,$ so $(g',h)$ is not adjacent to $(g,h).$
	  	
	  	\item There exist no vertices $u, v\in V\left(G\boxtimes H\right)$ where $N\left(v\right)\subseteq N\left(u\right)$ with no edge between $u$ and $v.$  For any non-adjacent vertices $(g,h)$ and $(g',h'),$ 
	  	$$\left( g \not\sim g' \wedge g \neq g' \right) \vee \left( h \not\sim h' \wedge h \neq h' \right).$$
	  	\begin{itemize}
	  		\item If $\left( g \not\sim g' \wedge g \neq g' \right),$ then for any $h_1 \sim h,$ $N(g,h) \ni (g,h_1) \not\in N(g',h').$
	  		\item If $\left( h \not\sim h' \wedge h \neq h' \right),$ then for any $g_1 \sim g,$ $N(g,h) \ni (g_1,h) \not\in N(g',h').$
	  	\end{itemize}
	\end{itemize}
	Thus by lemma 2, this graph is a stalemate.
\end{proof}
Combining the results about graph products, we obtain the following corollary:

\begin{corollary}
  Let $G$ and $H$ be connected graphs with $|G|>1$ and $|H|>1.$
  \begin{enumerate}
  \item $G~\square~ H$ is a stalemate if at least one of $G$ and $H$ is not a tree.
  \item $G\times H$ is a stalemate if both $G$ and $H$ are not killer-win, and killer-win otherwise.
  \item $G\boxtimes H$ is cop-win if both $G$ and $H$ have a universal vertex, and is stalemate otherwise.
  \end{enumerate}
\end{corollary}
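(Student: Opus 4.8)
The plan is to assemble the three statements directly from the three product lemmas just proved, together with the fact that graphs with a universal vertex are cop-win (Lemma 1(1)) and the commutativity of each product operation. First I would handle the Cartesian product: since $G\square H \cong H\square G$, part (1) is immediate from the Cartesian product lemma — if $G$ is not a tree, apply it as stated (using $|H|>1$), and if instead $H$ is not a tree, apply it to $H\square G$ (using $|G|>1$). Either way $G\square H$ is a stalemate.

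For part (2), the tensor product is also commutative, $G\times H \cong H\times G$, and the tensor product lemma was stated for arbitrary graphs with $|G|,|H|>1$, so it already gives exactly the dichotomy claimed: a stalemate when neither factor is killer-win, and killer-win when at least one factor is. Nothing further is needed here beyond quoting that lemma.

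For part (3), I would split into cases according to whether both factors have a universal vertex. If at least one of $G$, $H$ — say $G$, using $G\boxtimes H \cong H\boxtimes G$ — has no universal vertex, then the strong product lemma (applicable since $G$ is connected and $|H|>1$) shows $G\boxtimes H$ is a stalemate. If both $G$ and $H$ have universal vertices $g^*$ and $h^*$, then the one small verification needed is that $(g^*,h^*)$ is universal in $G\boxtimes H$: for any other vertex $(g,h)$, strong-product adjacency requires $g^*=g$ or $g^*\sim g$ (true since $g^*$ is universal or equal) and likewise in the second coordinate, so $(g^*,h^*)\sim(g,h)$. Then Lemma 1(1) gives that $G\boxtimes H$ is cop-win.

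The only content beyond bookkeeping is that last universal-vertex computation for the strong product; everything else is citing the earlier lemmas and using that $\square$, $\times$, and $\boxtimes$ are commutative. So I do not anticipate a genuine obstacle — the main care needed is to confirm that in each case the hypotheses of the cited lemma (the relevant factor connected and of size $>1$) hold, which they do because the corollary assumes $G$ and $H$ are connected with $|G|,|H|>1$.
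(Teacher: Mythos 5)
Your proposal is correct and matches the paper's intent exactly: the paper offers no separate argument for this corollary, deriving it directly by combining the three product lemmas (with the products' commutativity) and Lemma 1(1), with the only extra observation being that the pair of universal vertices is universal in the strong product — precisely the verification you supply.
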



\section{Algorithmic}
\label{section:algorithmic}
In \cite{tsai_arxiv_2017}, the author explores algorithms to compute the optimal cop strategy with random initial vertex and non-optimal robber distribution in the cop vs gambler game. The problem is analogous to Single Source Shortest Path Problem \cite{sssp}. Dijkstra's algorithm \cite{dijkstra} and Bellman-Ford algorithm \cite{bellman_ford} are adapted to find the optimal chase path of the cop that minimizes $T\left(v\right)$, the expected capture time of the cop starting from vertex $v$. The essence of these algorithms is to keep updating $t\left(v\right)$, the minimum expected capture time computed so far, by
$$
t\left(v\right)=\min\left(t\left(v\right), 1+\left(1-p_v\right)t\left(u\right)\right)
$$ where $u$ is adjacent to $v$.\\

It is possible to extend the update framework to minimize more objective functions. If the function $T\left(v\right)$ to be minimized satisfies
$$
T\left(v\right)=\min\left(J\left(v\right), \min_{u\in N\left(v\right)}H\left(v,u\right)\right)
$$ where $J\left(v\right)$ serves as upper bound of $T\left(v\right)$ and $N\left(v\right)$ is the set of vertices adjacent to $v$, then the adapted algorithms in \cite{tsai_arxiv_2017} compute $T\left(v\right)$ correctly. Of course, similar to the Single Source Shortest Path Problem, there are additional constraints: for generalized Dijkstra's algorithm to work, for every pair of adjacent $u$ and $v$, $T\left(v\right)\geq T\left(u\right)$ if $T\left(v\right)=H\left(v,u\right)$; for generalized Bellman-Ford algorithm to work, there should not be any cycle $C=\left(v_1,v_2,\ldots, v_k=v_1\right)$ such that for each $i\in[k-1]$, $H\left(v_i,v_{i+1}\right)=T\left(v_i\right)<T\left(v_{i+1}\right)$ because otherwise $T\left(v\right)$ is not well defined.\\

To find the minimum expected capture time for the cop and the gambler, $J\left(v\right)=1/p_v$ and $H\left(v,u\right)=1+\left(1-p_v\right)T\left(u\right)$. To generalize, when the cop has to spend $n\left(u,v\right)$ turns crossing edge $\left(u,v\right)$ from $u$ to $v$, $J\left(v\right)=1/p_v$ and $H\left(v,u\right)=1+\left(1-p_v\right)\left(T\left(u\right)+n\left(v,u\right)\right)$.\\

Let $e\left(m,v\right)$ be the minimum evasion probability for the gambler in $m$ turns when the cop starts at vertex $v$, with $e\left(m,v\right)=1$ when $m\leq0$. To compute $e\left(m,v\right)$, $J\left(v\right)=\left(1-p_v\right)^m$ and $H\left(v,u\right)=\left(1-p_v\right)e\left(m-1-n\left(v,u\right),u\right)$.\\

If the cop has to capture the gambler in $m$ turns, and the gambler occupies vertex $v$ with probability $p_{v,i}$ when the cop still has $i$ rounds left, then to compute $e\left(m,v\right)$ we need 
\begin{align*}
  J\left(v\right)&=\prod_{i=1}^m\left(1-p_{v,i}\right) \text{~~~and} \\
  H\left(v,u\right)&=\left(1-p_{v,m}\right)e\left(m-1-n\left(v,u\right),u\right).
\end{align*}

We explore a variation where more than one cop chases the gambler, which was defined in \cite{geneson_arxiv_2016}. In each round, each of the $k$ cops independently moves to an adjacent vertex or stays where they are.

If cops start at random initial positions before they know gambler's distribution, then the algorithms in \cite{tsai_arxiv_2017} could be extended to find the optimal strategy of the cops to minimize expected capture time. The key is to treat the joint positions of $k$ cops $\left(v_1,\ldots,v_k\right)$ as a vertex in the supergraph $G'=\left(V\left(G'\right),E\left(G'\right)\right)=\left(V\left(G\right)^k,E'\right)$, where $\left(\left(v_1,\ldots,v_k\right),\left(u_1,\ldots,u_k\right)\right)\in E'$ if and only if $v_i$ is adjacent or equal to $u_i$ for each $i\in[k]$.

\begin{definition}
  $\delta_i\left(U\right)=1$ if $v_i\in U$, otherwise $\delta_i\left(U\right)=0$.
\end{definition}

Using the notation from the beginning of this section,
\begin{align*}
  J\left(\{u_1,\ldots,u_k\}\right)&=\frac{1}{\sum_{i=1}^n\delta_i\left(\{u_1,\ldots,u_k\}\right)p_i}, \text{~~~and} \\
  H\left(U,U'\right)&=1+\left(1-\sum_{i=1}^n\delta_i\left(U\right)p_i\right)T\left(U'\right).
\end{align*}

The supergraph has $\left|V\left(G'\right)\right|=\left|V\left(G\right)\right|^k$ vertices and $\left|E\left(G'\right)\right|=O\left(\left(\left|V\left(G\right)\right|+\left|E\left(G\right)\right|\right)^k\right)$ edges.

\section{Cop and random killer}
Consider the following graph pursuit game that combines elements of cop and killer and cop and gambler.\\

For any given connected graph $G$, the cop chooses any starting vertex of $G$ and the random killer chooses a probability distribution $\{p_v\}_{v\in V\left(G\right)}$ on the vertices of $G$. The cop does not know the distribution, but the random killer knows the cop's starting position before they choose the distribution. On the odd turns, the cop can either move to an adjacent vertex or stay at their current vertex. On the even turns, the random killer hops to a vertex using their probability distribution.\\

The random killer wins if they land on the cop's current vertex, and the cop wins if they land on the killer's current vertex. They could end up with stalemate, for example if the cop stays at a vertex $v$ such that $p_v+\sum_{u\in N\left(v\right)}p_u=0$. If both cop and killer play to maximize their probability of victory, we say that $G$ is cop-win if the cop has higher probability of victory, and otherwise $G$ is random killer-win.\\

\begin{lemma}
  For the cop and known random killer, an optimal killer will not assign a probability of $0$ to the cop's initial vertex.
\end{lemma}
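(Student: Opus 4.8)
Let $v_0$ be the vertex the cop starts on, and argue by contradiction: suppose the killer's distribution $\{p_v\}$ is optimal yet $p_{v_0}=0$. The first thing to pin down is the one structural feature that makes the argument go: since on every even turn the killer redraws its position from the \emph{same} fixed distribution, independently of where it currently sits, the game state after an odd turn is determined by the cop's vertex alone, and the killer's current location matters to the cop only on a turn where the cop could step directly onto it. In particular ``the cop is on $v_0$ and about to move'' \emph{is} the initial position; write $W$ for its value, i.e.\ the cop's optimal winning probability against $\{p_v\}$.

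Next I would analyse the cop strategy ``stay on $v_0$ for the first move, then play optimally.'' On the killer's first hop it lands on $v_0$ with probability $p_{v_0}=0$; it lands on a neighbour of $v_0$ with probability $r:=\sum_{u\in N(v_0)}p_u$, after which the cop steps onto that neighbour and wins; and with probability $1-r$ it lands outside $N[v_0]$, leaving the game in the initial position — the killer's now-distant vertex being unreachable on the cop's next move, hence irrelevant — which is worth $W$. So this strategy secures at least $r+(1-r)W$, and optimality of $W$ gives $W\ge r+(1-r)W$, i.e.\ $rW\ge r$. When $r>0$ this forces $W=1$: the cop wins with certainty, so the killer wins with probability $0$, and on any graph where the killer can win with positive probability against some distribution this contradicts optimality.

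The remaining case $r=0$ — the killer putting no mass on $N[v_0]$ at all — is where I expect the genuine work. Here the cop sitting on $v_0$ survives the killer's first hop for free, so the killer has wasted a move, and the natural remedy is to transfer an infinitesimal amount of the killer's (entirely far-away) mass onto $v_0$: because the cop occupies $v_0$ on turn $1$, raising $p_{v_0}$ above $0$ destroys the cop's cost-free ``pass'' there while introducing only an $O(\varepsilon)$ first-turn capture chance, so it should strictly help the killer. Turning this into a rigorous \emph{strict} improvement — leaning on the monotonicity of the value in each $p_v$ already implicit in Section~\ref{section:algorithmic} — together with adopting a reading convention for the degenerate graphs on which the cop wins almost surely no matter what, is the main obstacle; the memorylessness argument above is otherwise routine.
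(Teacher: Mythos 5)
Your argument is sound in the case $r=\sum_{u\in N(v_0)}p_u>0$ (stay at $v_0$, never get hit since $p_{v_0}=0$, and ambush the first neighbour the killer lands on; your fixed-point inequality is just a roundabout way of saying the cop wins almost surely), and the "degenerate graph" worry you defer is vacuous: against the uniform distribution the killer lands on the cop's current vertex with probability at least $1/n$ on its first hop, so on every graph the killer has a distribution with strictly positive win probability, and any distribution forcing killer-win probability $0$ is therefore non-optimal. The genuine gap is exactly the case you postpone, $r=0$, and the fix you sketch does not work. Moving an $\varepsilon$ of mass onto $v_0$ does \emph{not} strictly help the killer: the cop moves before the killer's first hop, so she simply steps off $v_0$ onto a neighbour $u$ (which still has $p_u=0$ because all of $N(v_0)$ had zero mass and you only increased $p_{v_0}$), sits at $u$, and waits for the killer to land on $v_0$, which now happens with probability $\varepsilon>0$ per even turn. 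She is never captured and wins almost surely, so the killer's value is $0$ both before and after the perturbation; the monotonicity you hope to import from Section~\ref{section:algorithmic} cannot yield a strict improvement here.

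The paper's proof closes this case (and subsumes your $r>0$ case) with one observation you are missing: by connectivity there is a path $v_0=v_1,v_2,\ldots,v_k$ with $p_{v_i}=0$ for $i\le k-1$ and $p_{v_k}>0$ (take a shortest path from $v_0$ to the support of the distribution). The cop walks this path, is never captured because every vertex she occupies has killer-probability $0$, waits at $v_{k-1}$, and captures the killer the first time it lands on $v_k$, which happens almost surely. Hence any distribution with $p_{v_0}=0$ gives the killer win probability $0$, which, by the uniform-distribution comparison above, is not optimal. If you replace your case split and perturbation plan with this zero-probability-path ambush, your write-up becomes a complete proof essentially identical to the paper's.
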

\begin{proof}
  If the cop starts from a vertex $v_1$ with $0$ probability of being visited by the killer, then there is a path $v_1,v_2,\ldots,v_k$ such that $p_{v_i}=0$ for $i\in[k-1]$ and $p_{v_k}>0$. The cop could just follow this path and wait at $v_{k-1}$ until the killer takes $v_k$, then the cop wins. Note that the cop cannot lose because she takes this path with zero probability of being landed by the killer.
\end{proof}

The algorithm described in Section \ref{section:algorithmic} could be applied to decide the cop's strategy that maximizes her winning probability. In particular, to maximize winning probability $T\left(v\right)$ of the cop at vertex $v$ when it is killer that moves next, we have
$$
T\left(v\right)=\max\left(J\left(v\right), \max_{u\in N\left(v\right)}H\left(v,u\right)\right),
$$
$H\left(v,u\right)$ and $J\left(v\right)$ are required.
The probability of cop winning by staying at $v$ is $J\left(v\right)=\frac{\sum_{u\in N\left(v\right)}p_u}{p_v+\sum_{u\in N\left(v\right)}p_u}$. The update candidate of winning probability by moving from $v$ to $u$ is $\sum_{w\in N\left(v\right)}p_w$, the chance that the killer next lands on $N\left(v\right)$ before the cop moves, plus $T\left(u\right)[1-p_v-\sum_{w\in N\left(v\right)}p_w]$. So
$$
H\left(v,u\right)=T\left(u\right)\left[1-p_v-\sum_{w\in N\left(v\right)}p_w\right]+\sum_{w\in N\left(v\right)}p_w.
$$
Notice here that maximizing the cop's winning probability may also increase losing probability. For example, a cop staying at a vertex $v$ with $p_v+\sum_{u\in N\left(v\right)}p_u=0$ never wins or loses, yet moving to other vertices may make the chances of winning and losing both non-zero.\\

Interestingly the cop almost always has a higher chance to win than the killer as long as the killer's distribution is known.
\begin{lemma}
  If the killer's distribution is known to the cop, then the only two graphs for the killer to have an equal or higher chance than the cop of winning are $P_1$ and $P_2$.
\end{lemma}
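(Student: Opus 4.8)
The plan is to treat $P_1$ and $P_2$ by hand and to show that \emph{every} connected graph on at least three vertices is cop-win (with the cop winning strictly more often than the killer), so that these two graphs are indeed the only exceptions. For $P_1$ the cop is forced to stay and the killer hops onto her, so the killer wins with probability $1$. For $P_2$ with vertices $a,b$: starting from $a$, after the cop's first move and the killer's first hop the outcome is forced (either the killer landed on the cop, a loss, or the killer is on the other vertex, which is adjacent, so the cop steps on and captures), so the cop's value is $\max(p_b-p_a,\,p_a-p_b)=|p_a-p_b|$, which the killer drives to $0$ by $p_a=p_b=\tfrac12$; hence equal chances. (By the earlier lemma the optimal killer never puts mass $0$ on the cop's start anyway, but the optimum $p_a=p_b=\tfrac12$ is already interior.) These computations are routine.

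The core of the proof is the following: if $G$ is connected and $|G|\ge 3$, then $G$ has a vertex $v$ with $\deg(v)\ge 2$ (e.g.\ an internal vertex of a spanning tree), and I claim the cop wins by starting at such a $v$. Fix an arbitrary killer distribution $\{p_w\}$ and put $b=\sum_{u\sim v}p_u$; I split into cases. If $p_v=0$, the cop follows a shortest path from $v$ to a vertex of positive mass and camps at the penultimate vertex: every vertex she occupies along the way has mass $0$, so she never loses, and she eventually captures the killer when it lands on the last (positive-mass) vertex, so she wins with probability $1$ — this is precisely the argument used in the lemma about optimal killers. If $p_v>0$ and $b>p_v$, the cop camps at $v$; conditioning on the killer's first hop into the closed neighbourhood $N[v]$, she wins with probability $b/(p_v+b)>\tfrac12$, since whenever the killer lands on a neighbour she steps onto it and captures. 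Finally, if $p_v>0$ and $b\le p_v$, pick a neighbour $u$ of $v$ with $p_u\le\tfrac12 b\le\tfrac12 p_v$ (possible as $\deg(v)\ge 2$): if $p_u=0$ the cop steps to $u$ and camps there, winning with probability $1$ exactly as in the first case (using $v\sim u$ and $p_v>0$); if $p_u>0$, then $\sum_{w\sim u}p_w\ge p_v\ge 2p_u>p_u$, so camping at $u$ wins with probability $\sum_{w\sim u}p_w/\bigl(p_u+\sum_{w\sim u}p_w\bigr)>\tfrac12$. In every case the cop wins strictly, so $G$ is cop-win.

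The step I expect to need the most care is the last case $b\le p_v$: the inequality $\sum_{w\sim u}p_w>p_u$ can degenerate to equality precisely when $u$ is the unique positive-mass neighbour of $v$ and $v$ is the unique positive-mass neighbour of $u$, and it is exactly there that the hypothesis $\deg(v)\ge 2$ is genuinely used, via the escape route of stepping instead to a mass-$0$ neighbour of $v$. I also want to state explicitly that "camping," meaning staying put but taking a one-step detour to capture whenever the killer lands in the open neighbourhood, is a legal cop strategy, and that the stated win/loss probabilities are the sums of the obvious geometric series (so stalemate has probability $0$ whenever the relevant neighbourhood mass is positive). No result stated after this point is invoked; the only ingredients are connectivity, the existence of a degree-$\ge 2$ vertex, and (a restatement of) the earlier lemma that an optimal killer never assigns probability $0$ to the cop's initial vertex. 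Since the whole game is only defined for connected $G$, "the only two graphs" is understood among connected graphs.
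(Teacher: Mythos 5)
Your proof is correct, and it is worth noting how it differs from the paper's. The paper fixes a start $v$, observes that the killer can only avoid a cop advantage if every $u\in N[v]$ satisfies $n_u\le p_u$ (where $n_u=\sum_{w\sim u}p_w$), and then argues structurally that this forces $P_1$ or $P_2$; you instead prove the contrapositive constructively, by having the cop start at a vertex of degree at least $2$ (which exists in every connected graph on at least $3$ vertices) and exhibiting, for every distribution, an explicit camping or zero-mass-walk strategy with win probability strictly above $\tfrac12$, plus the routine verification that $P_1$ and $P_2$ really do give the killer an equal or better chance. Your version is the more complete one: the paper's step asserting that the lone neighbor $u$ ``also has degree $1$'' is not justified for an arbitrary start --- on $P_3$ with the cop at a leaf, the distribution $\left(\tfrac12,\tfrac12\right)$ on the leaf and its neighbor forces equality, so that step tacitly relies on the cop's freedom to choose her starting vertex, which is exactly what your choice of a degree-$\ge2$ start makes explicit; your argument also handles the degenerate situations $p_v=0$ and $n_v=0$ (via the walk to positive mass, echoing the earlier lemma) and checks the $P_1$, $P_2$ side, none of which the paper spells out. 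The cost is a slightly longer case analysis, but both proofs turn on the same key comparison of $n_u$ versus $p_u$ over the closed neighborhood of the start and the value $n_u/\left(p_u+n_u\right)$ of camping.
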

\begin{proof}
  We denote $n_v=\sum_{u\in N\left(v\right)}p_u$ and let the cop start at $v$. The only possible graphs that the cop does not have higher winning chance than the killer are those that in the first round the cop could not move to or stay at a vertex $u$ such that $n_u>p_u$, which could only happen if $n_v\leq p_v$. If $p_v=n_v$, then we still need $v$ to have only one neighbor $u$, $p_v=p_u$, and $u$ also has degree $1$, which is only possible for $P_2$. If $p_v>n_v$, then unless $v$ is neighborless the cop could always move to a neighbor $u$, then $n_u\geq p_v>n_v\geq p_u$ and the cop is more likely to win. So the only other possibility is $P_1$. Note that for a $P_2$ to make killer at least as likely to win as the cop his distribution must be $\left(0.5,0.5\right)$.
\end{proof}

We further quantify the cop's advantage when learning the killer's distribution before her first move by proving a lower bound on her winning probability in terms of the graph's maximum degree.
\begin{lemma}
  \label{lemma:star}
  If connected $G$ has maximum degree $d$, then cop can win with probability at least $\frac{\sqrt d}{1+\sqrt d}$ after learning killer's distribution.
\end{lemma}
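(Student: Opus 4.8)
The plan is to have the cop start at a vertex $v$ of maximum degree $d$ and then, once she has learned the distribution $\{p_w\}$, follow a simple ``commit‑and‑wait'' strategy: on her first (and free) move she either stays at $v$ or steps to a chosen neighbor $u\in N(v)$, and thereafter she never moves except to pounce. If the cop sits at $u$, then on each killer hop the killer lands on $u$ with probability $p_u$ (killer wins), lands on some $w\in N(u)$ with probability $n_u:=\sum_{w\in N(u)}p_w$ (the cop captures on her next move, since the killer does not re-hop on odd turns), and lands elsewhere otherwise; the hops are independent, so provided $p_u+n_u>0$ the game ends almost surely and the cop wins with probability $\frac{n_u}{p_u+n_u}$. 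Since the first move is always safe (the killer has no position yet), the cop can guarantee $\max_{u\in\{v\}\cup N(v)}\frac{n_u}{p_u+n_u}$, and it suffices to exhibit some $u$ in this set with $\frac{n_u}{p_u+n_u}\ge\frac{\sqrt d}{1+\sqrt d}$, equivalently $\sqrt d\,p_u\le n_u$.

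The main step is an averaging argument over the closed neighborhood of $v$. Suppose for contradiction that $\sqrt d\,p_u>n_u$ for every $u\in\{v\}\cup N(v)$. Taking $u=v$ gives $\sqrt d\,p_v>n_v=\sum_{w\in N(v)}p_w$. Taking $u=w$ for each $w\in N(v)$, and using that $v\in N(w)$ so $n_w\ge p_v$, gives $\sqrt d\,p_w>n_w\ge p_v$, hence $p_w>p_v/\sqrt d$. Summing over the $d$ neighbors of $v$ — this is exactly where maximality of $\deg v$ is used — yields $n_v=\sum_{w\in N(v)}p_w> d\cdot p_v/\sqrt d=\sqrt d\,p_v$, contradicting $\sqrt d\,p_v>n_v$. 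So some $u\in\{v\}\cup N(v)$ satisfies $\sqrt d\,p_u\le n_u$; the cop moves to (or stays at) that $u$ and waits, winning with probability $\frac{n_u}{p_u+n_u}\ge\frac{\sqrt d\,p_u}{p_u+\sqrt d\,p_u}=\frac{\sqrt d}{1+\sqrt d}$, since $x\mapsto\frac{x}{p_u+x}$ is increasing.

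The step needing the most care — and the main obstacle — is the boundary behavior when $n_u=0$, where the ``wait'' analysis degenerates ($\frac{n_u}{p_u+n_u}$ is a $0/0$ if $p_u=0$, and is a stalemate otherwise). But $\sqrt d\,p_u\le n_u=0$ forces $p_u=0$ as well, so $u$ and all of its neighbors carry killer-probability $0$; since $G$ is connected and the distribution sums to $1$, there is a path from $u$ through zero-probability vertices to a vertex of positive probability, and the cop can walk along it (never in danger, as every vertex she occupies is hit with probability $0$) and pounce at the end, winning with probability $1$. The same walking trick disposes of the case $p_v=0$. What remains is routine bookkeeping: first-move safety, the fact that a killer hop into $N(u)$ leaves the killer fixed there until the cop's next move, and the trivial case $d=0$ (the graph $P_1$), where $\frac{\sqrt 0}{1+\sqrt 0}=0$ makes the bound vacuous.
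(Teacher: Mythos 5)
Your proof is correct and follows essentially the same route as the paper: start at a maximum-degree vertex $v$, and use $n_w\ge p_v$ for $w\in N(v)$ together with the pigeonhole/averaging bound on the least-likely neighbor to find $u\in\{v\}\cup N(v)$ with $n_u\ge\sqrt d\,p_u$, hence winning probability $\frac{n_u}{p_u+n_u}\ge\frac{\sqrt d}{1+\sqrt d}$ by the stay-and-pounce strategy; the paper phrases this by noting the product $\frac{n_v}{p_v}\cdot\frac{n_u}{p_u}\ge d$ for $u=\arg\min_{w\in N(v)}p_w$, which is the contrapositive of your averaging step. Your explicit treatment of the degenerate cases ($n_u=0$, zero-probability vertices, $d=0$) is a welcome addition that the paper's proof leaves implicit.
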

\begin{proof}
  The cop could start at vertex $v$ with degree $d$. After killer's distribution is revealed, she decides to stay or move by maximizing $n_u/p_u$ where $u\in v\cup N\left(v\right)$. Consider $u=\arg\min_{w\in N\left(v\right)}p_w$, then $p_u\leq n_v/d$ and $n_u/p_u\geq dp_v/n_v$. Therefore the product of quantities $n_v/p_v$ and $n_u/p_u$ is at least $d$, making the larger of the two at least $\sqrt{d}$.
\end{proof}

It is noteworthy that there is connected graph with maximum degree $d$ that does not allow cop a winning chance higher than $\frac{\sqrt d}{1+\sqrt d}$: for a star with $d+1$ vertices, the killer can assign a probability of $\frac1{1+\sqrt d}$ to the center vertex, and an equal probability to the rest of the vertices. Moreover, it is the only graph with maximum degree $d$ that the cop has chance of winning as low as $\frac{\sqrt d}{1+\sqrt d}$.

\begin{corollary}
  This bound is tight only on the $\left(d+1\right)$-vertex star, where the killer takes the center with probability $\frac1{1+\sqrt d}$ and each of the other vertices with equal probabilities.
\end{corollary}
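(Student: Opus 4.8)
The plan is to prove the two directions separately: that the prescribed distribution on $K_{1,d}$ drives the cop's value down to exactly $\frac{\sqrt d}{1+\sqrt d}$, and that no other connected graph of maximum degree $d$ admits a distribution doing this.

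\emph{Realizability on the star.} On $K_{1,d}$ with center $c$ and leaves $\ell_1,\dots,\ell_d$, put $p_c=\frac1{1+\sqrt d}$ and $p_{\ell_i}=\frac1{\sqrt d(1+\sqrt d)}$; these sum to $1$ since $p_c+d\,p_{\ell_i}=\frac{1}{1+\sqrt d}+\frac{\sqrt d}{1+\sqrt d}=1$. Here $n_c=\sum_i p_{\ell_i}=\frac{\sqrt d}{1+\sqrt d}$ and $p_c+n_c=1$, so if the cop sits at $c$ the game ends at the first hop: she wins with probability $n_c=\frac{\sqrt d}{1+\sqrt d}$ and loses with the complementary probability. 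A short recursion over the finitely many ``cop position just before a hop'' states (center, or some leaf) shows that $\frac{\sqrt d}{1+\sqrt d}$ is also an upper bound on her value from every vertex --- e.g.\ from a leaf $\ell_i$ one has $J(\ell_i)=\frac{p_c}{p_{\ell_i}+p_c}=\frac{\sqrt d}{1+\sqrt d}$ and $H(\ell_i,c)=\frac{\sqrt d}{1+\sqrt d}(1-p_{\ell_i}-p_c)+p_c=\frac{\sqrt d}{1+\sqrt d}$, using $1-p_{\ell_i}-p_c=1-\frac1{\sqrt d}$ and the already-found value $T(c)=\frac{\sqrt d}{1+\sqrt d}$. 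Hence the cop's value on this graph is exactly $\frac{\sqrt d}{1+\sqrt d}$, so by Lemma~\ref{lemma:star} the bound is attained.

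\emph{Uniqueness.} Suppose $G$ is connected with maximum degree $d\ge 1$ and the killer has a distribution against which the cop's value equals $\frac{\sqrt d}{1+\sqrt d}$. Fix $v$ with $\deg v=d$. First, every vertex of $\{v\}\cup N(v)$ has positive probability: otherwise, starting the cop on a zero-probability vertex, the earlier lemma (follow a zero-probability path to a positive-probability vertex and wait there) gives the cop a win with probability $1>\frac{\sqrt d}{1+\sqrt d}$. Now rerun the proof of Lemma~\ref{lemma:star}: with $u=\arg\min_{w\in N(v)}p_w$, the estimates $p_u\le n_v/d$ and $n_u\ge p_v$ furnish a ``stay'' strategy of value $\ge\frac{\sqrt d}{1+\sqrt d}$; since the value is not larger, all of those inequalities are tight, so $n_v/p_v=\sqrt d$, $p_u=n_v/d$, and $n_u=p_v$. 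The minimum neighbour-probability thus equals their average, so every $w\in N(v)$ has $p_w=p_v/\sqrt d$; and repeating the ``move to $w$, then stay'' argument for each neighbour $w$ (its value $\frac{n_w}{p_w+n_w}\le\frac{\sqrt d}{1+\sqrt d}$ forces $n_w/p_w\le\sqrt d$, while $p_w=p_v/\sqrt d$ with $n_w\ge p_v$ forces $n_w/p_w\ge\sqrt d$) gives $n_w=p_v$ for every $w\in N(v)$.

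Finally, $n_w=p_v$ and $v\in N(w)$ mean that every neighbour of $w$ other than $v$ has probability $0$; since every neighbour of $v$ has positive probability, such a vertex $z$ must be new with $p_z=0$, and then starting the cop at $z$ the earlier lemma again gives her a win with probability $1$, contradicting the value $\frac{\sqrt d}{1+\sqrt d}<1$. So every neighbour of $v$ is a leaf, and since $\deg v=d$ is saturated, connectedness forces $G=K_{1,d}$; normalizing $p_v+d\cdot p_v/\sqrt d=1$ then gives $p_v=\frac1{1+\sqrt d}$ and each leaf probability $\frac1{\sqrt d(1+\sqrt d)}$, the stated distribution. I expect the main obstacle to be this last contradiction --- making the ``retreat into the zero-probability region and camp on its boundary'' argument fully rigorous (existence and reachability of a boundary vertex, the cop being safe throughout the walk, and the killer being drawn onto a catchable vertex almost surely) --- and, secondarily, taking care to extract the equalities for all of $N(v)$, not only for the minimizer $u$.
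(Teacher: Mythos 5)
Your argument follows the paper's route (equality analysis in the proof of Lemma~\ref{lemma:star}, forcing $p_w=p_v/\sqrt d$ and $n_w=p_v$ for every $w\in N(v)$, then ruling out further neighbors), and your explicit check that the stated distribution on $K_{1,d}$ pins the cop's value at exactly $\frac{\sqrt d}{1+\sqrt d}$ is a useful addition that the paper only asserts. However, your final step has a genuine gap. You cannot ``start the cop at $z$'': the killer's distribution is his response to the cop starting at $v$, and the hypothesis being contradicted is that the cop's value \emph{from $v$} against that distribution equals $\frac{\sqrt d}{1+\sqrt d}$. That the cop would win with probability $1$ from a different starting vertex against this same distribution contradicts nothing, since against a cop starting at $z$ the killer would choose a different distribution. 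And if you instead mean that the cop, starting at $v$, walks to $z$, she is not safe along the way: $z$ is at distance two from $v$, every vertex of $N[v]$ has positive probability (as you yourself established), so she is exposed for at least one killer hop at $w$ with $p_w=p_v/\sqrt d>0$, and her winning probability is at most $1-p_w<1$, not $1$. (The analogous phrasing in your first claim is harmless, because the cop moves before the killer's first hop and so can occupy a zero-probability \emph{neighbor} of $v$ without exposure; at distance two this fails.)

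The contradiction survives, but only quantitatively, which is exactly how the paper argues: the strategy ``move $v\to w$; catch the killer if he lands on $v$; otherwise move to $z$ and camp there forever'' wins with probability $p_v+\left(1-p_v-p_w\right)=1-p_w$, since all neighbors of $w$ other than $v$ carry zero probability and camping at $z$ wins almost surely. From $n_v=\sqrt d\,p_v$ and $p_v+n_v\le 1$ one gets $p_v\le\frac1{1+\sqrt d}$, hence $1-p_w\ge 1-\frac1{\sqrt d\left(1+\sqrt d\right)}\ge\frac{\sqrt d}{1+\sqrt d}$, with equality only when $d=1$. So for $d>1$ the existence of $z$ contradicts tightness, while for $d=1$ no such $z$ exists because $w$ has degree at most $1$ and already has the neighbor $v$. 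Your write-up, asserting an unconditional probability-$1$ win, omits both the exposure cost at $w$ and the separate treatment of $d=1$; you flagged this step as the likely obstacle, and indeed as written it is a real gap rather than a missing formality.
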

\begin{proof}
  We examine the inequality in the proof of Lemma \ref{lemma:star}, and observe that the equality holds only when the initial vertex $v$ (with degree $d$) has chance $\frac{1}{1+\sqrt{d}}$, each of its neighbors has chance $\frac{1}{\sqrt{d}\left(\sqrt{d}+1\right)}$, and none of them have any neighbor with non-zero probability. It remains to show that none of them have any other neighbor. Clearly $v$ does not have any other neighbor because it has degree $d$. For each of $v$'s neighbor $u$, if it has a $0-$chance neighbor $w$, then the cop could move from $v$ to $u$ in the first turn, and then from $u$ to $w$ in the second and stay there forever. In this way her chance of winning is
  $$
  \frac{1}{\sqrt{d}+1}+\left(1-\frac{1}{\sqrt{d}\left(\sqrt{d}+1\right)}-\frac{1}{\sqrt{d}+1}\right).
  $$
  It is not hard to see that this winning probability is greater than or equal to $\frac{\sqrt{d}}{\sqrt{d}+1}$. The equality holds only when $d=1$, in which case $w$ does not exist because $u$ already has a neighbor $v$.
\end{proof}

\section{Acknowledgments}
This paper has resulted from the 2017 CrowdMath project on graph algorithms and applications (online at \url{http://www.aops.com/polymath/mitprimes2017a}). CrowdMath is an open program created jointly by the MIT Program for Research in Math, Engineering, and Science (PRIMES) and the Art of Problem Solving that gives students all over the world the opportunity to collaborate on a research project.

The contributors for this paper were Espen Slettnes (goodbear), Carl Joshua Quines (cjquines0), Shen-Fu Tsai (parityhome), and Jesse Geneson (JGeneson).

\end{document}